\newcolumntype{L}[1]{>{\raggedright\let\newline\\\arraybackslash\hspace{0pt}}m{#1}}
\newcolumntype{C}[1]{>{\centering\let\newline\\\arraybackslash\hspace{0pt}}m{#1}}
\newcolumntype{R}[1]{>{\raggedleft\let\newline\\\arraybackslash\hspace{0pt}}m{#1}}
\newcommand{\minitab}[2][l]{\begin{tabular}{#1}#2\end{tabular}}
\newtheorem{theorem}{Theorem}[section]
\newtheorem{lemma}[theorem]{Lemma}
\newtheorem{remark}[theorem]{Remark}
\newtheorem{proof}[theorem]{Proof}
\def\bbeta{{\boldsymbol{\beta}}}
\def\be{{\boldsymbol{e}}}
\def\bh{{\boldsymbol{h}}}
\def\bw{{\boldsymbol{w}}}
\def\bx{{\boldsymbol{x}}}
\def\bH{{\boldsymbol{H}}}
\def\bR{{\boldsymbol{R}}}
\def\bbC{{\mathbb{C}}}
\def\bbE{{\mathbb{E}}}
\def\calS{{{\mathcal{S}}}}
\def\calCN{{{\mathcal{CN}}}}
\def\calK{{{\mathcal{K}}}}
\def\calC{{{\mathcal{C}}}}
\def\trans{{\textsf{T}}}
\def\hermit{{\textsf{H}}}
\def\t{{\textsf{t}}}
\def\az{{\textsf{az}}}
\def\el{{\textsf{el}}}
\newlength{\figwidth}
\newlength{\figwidthb}
\title{Adaptive Multicell 3D Beamforming in Multi-Antenna Cellular Networks\thanks{N. Seifi is with Ericsson Research, SE-164 80 Stockholm, Sweden (e-mail: nima.seifi@ericsson.com).}\thanks{R. W. Heath Jr. is with the Wireless Networking and Communications Group (WNCG), Department of Electrical and Computer Engineering, The University of Texas at Austin, Austin, TX 78712-0240 USA (e-mail: rheath@ece.utexas.edu).}\thanks{M. Coldrey is with Ericsson Research, SE-417 56 Gothenburg, Sweden (e-mail: mikael.coldrey@ericsson.com).}\thanks{T. Svensson is with the Department of Signals and Systems, Chalmers University of Technology, SE-412 96 Gothenburg, Sweden (e-mail: tommy.svensson@chalmers.se).}}
\author{Nima~Seifi, Robert~W.~Heath~Jr., Mikael~Coldrey, and Tommy~Svensson}
\begin{document}

\maketitle
%\vspace{-50pt}
\begin{abstract}
We consider a cellular network with multi-antenna base stations (BSs) and single-antenna users, multicell cooperation, imperfect channel state information, and directional antennas each with a vertically adjustable beam. We investigate the impact of the elevation angle of the BS antenna pattern, denoted as tilt, on the performance of the considered network when employing either a conventional single-cell transmission or a fully cooperative multicell transmission. Using the results of this investigation, we propose a novel hybrid multicell cooperation technique in which the intercell interference is controlled via either \emph{cooperative beamforming} in the horizontal plane or \emph{coordinated beamfroming} in the vertical plane of the wireless channel, denoted as \emph{adaptive multicell 3D beamforming}. The main idea is to divide the coverage area into two disjoint \emph{vertical regions} and adapt the multicell cooperation strategy at the BSs when serving each region. A fair scheduler is used to share the time-slots between the vertical regions. It is shown that the proposed technique can achieve performance comparable to that of a fully cooperative transmission but with a significantly lower complexity and signaling requirements. To make the performance analysis computationally efficient, analytical expressions for the user ergodic rates under different beamforming strategies are also derived.

%In a general multicell environment, the aggregate channel vector between each user and multiple BSs, denoted as the \emph{network MIMO channel}, have \emph{non-identically} distributed elements due to a different pathloss to each BS. This makes the ergodic rate analysis of such systems tedious, since most of the available techniques assume channel vectors with independently and identically distributed (i.i.d.) elements. In this letter, we focus on the practical scenario of zero-forcing beamforming at the BSs under imperfect channel state information. We use the statistical properties of Gamma random variables to propose a new method for representing the network MIMO channel via an equivalent i.i.d. MIMO channel. We then employ the developed equivalency together with the available tools for i.i.d. MIMO channels to derive an accurate analytical. %Based on the developed i.i.d. equivalency, we provide new insights into the operation of network MIMO system together with a discussion of some non-trivial effects.
\end{abstract}

%\IEEEpeerreviewmaketitle
\vspace{-10pt}
\begin{keywords}
Antenna tilt, interference management, multicell cooperation, 3D beamforming.
\end{keywords}
%\vspace{-10pt}
\section{Introduction}\label{sec:intro}

\IEEEPARstart{I}{ncreasing} the area spectral efficiency of wireless networks requires a dense deployment of infrastructure and aggressive frequency reuse~\cite{ieeetc:6824752}. With shrinkage of the cell size, the number of cell edges -- and the number of cell-edge users -- in the network increases. This makes intelligent intercell interference (ICI) management crucial for successful operation of dense networks.

Multicell cooperation is an efficient technique to combat ICI~\cite{ieeetwc:5594708,ieeetc:Lozano2012}. In the most aggressive form of multicell cooperation, the channel state information (CSI) and the data of users are fully shared among base stations (BSs) via high-speed backhaul links. These BSs then act as a single distributed multi-antenna transmitter that serves multiple users through beamforming, commonly referred to as \emph{cooperative beamforming} or \emph{network multiple-input multiple-output} (MIMO). Although network MIMO can completely eliminate the ICI within the BSs' coverage area, it requires substantial signaling overhead and backhaul capacity for CSI and data sharing~\cite{ieeetwc:5757724}.

Prior work on network MIMO has mainly considered 2D cellular layouts and focused only on the horizontal plane of the wireless channel~\cite{ieeewc:5472938,ieeewc:6241389}, while ignoring the vertical dimension. Because of the 3D nature of the real-world wireless channel, employing network MIMO in the horizontal plane as the only ICI management strategy in the network seems like an inefficient and complex approach that cannot fully exploit all the degrees of freedom offered by the channel.

%the system level performance is not necessarily better because of the increased number of pilot signals required for CSI training, denoted as the \emph{training overhead}~\cite{ieeewc:Caire2010,ieeewc:6241389}. 

%In frequency-division duplexing (FDD) systems, such an overhead can degrade the performance of network MIMO even below that of conventional uncoordinated transmission when the number of coordinating antennas becomes very large~\cite{ieeewc:6095627}.

In a less complex form of multicell cooperation, commonly referred to as \emph{coordinated beamforming}, only the CSI of users is shared among the BSs to enable joint beamforming design, while the data for each user is transmitted by a single BS. With no need of data sharing, coordinated beamforming has significantly reduced signaling requirements compared to network MIMO. One simple coordinated beamfroming strategy is to exploit the vertical plane of the wireless channel for ICI management via coordinatively adapting the elevation angle of the BS antenna pattern, denoted as tilt. By appropriately selecting the tilt, it is possible to increase the desired signal level at an intended user, while reducing the ICI towards a non-intended user. In conventional cellular networks typically a fixed tilt is used at all BSs over all time-slots. This tilting strategy, denoted as \emph{cell-specific tilting}, can not adapt to the particular locations of the scheduled users. Therefore, users at different locations of the cell experience different antenna gains. For example, users close to the peak of the main beam observe a high antenna gain, while those close to the side-lobes experience a low antenna gain. With advances in antenna technology, it is possible to adapt the tilt rapidly using baseband processing~\cite{ieeewc:6181190}. This makes tilt an important parameter in the design of intercell interference management techniques. 

In this paper, we propose a hybrid multicell cooperation strategy that adaptively exploits either the horizontal plane or the vertical plane of the wireless channel to manage the ICI experienced by users, denoted as \emph{adaptive multicell 3D beamforming}. The key idea consists of partitioning the coverage area into disjoint ``vertical'' regions and adapting the multicell cooperation strategy to serve the users in each vertical region. CSI and data sharing are required only when the ICI management is performed in the horizontal plane. Our analytical and numerical results provide useful insights for the design of practical multicell cooperation strategies.%As a tool for our performance analysis, we use the statistical properties of Gamma random variable and derive accurate analytical expressions for user ergodic rates under imperfect CSI for different transmission modes.

\vspace{-12pt}
\subsection{Related Work}\label{subsec:relwork}

%Due to the training overhead bottleneck of network MIMO, recent work has focused on transmission techniques that switch between conventional uncoordinated transmission and network MIMO depending on the users' locations (see e.g.~\cite{ieeewc:5472938,ieeewc:6241389}). 

%This has not been extensively studied in the context of with BS coordination. In SCNs, the observed antenna gains at the users are comparable to their pathloss values owing to the reduced access distances to the BSs~\cite{ieeewc:6239994,ieeewc:Muller2012}. Therefore, antenna tilt has a more noticeable impact on users' performance in such networks.%Hence, cell-specific tilting seems to be an inefficient approach for optimizing the throughput over the coverage area in such networks.

%

%The value of the tilt in cell-specific tilting is chosen based on the maximization of some statistical performance metric such as \emph{edge throughput} or \emph{peak throughput} defined respectively as the $5$-percentile and the $95$-percentile of the throughput distribution over the cell.
%In practice there are limitations on the maximum tilt that can be applied and the speed with which the tilt can be varied~\cite{ieeewc:5493599,ieeewc:Halbauer2011}.
Dynamic tilt adaptation as an additional degree of freedom for designing efficient multicell cooperation techniques has
recently attracted a lot of interest~\cite{ieeewc:5285317,ieeewc:5910725,ieeewc:Halbauer2011,ieeewc:6218672}. The authors in~\cite{ieeewc:6218672}, have developed a coordinated beamforming framework in which intercell interference
is controlled solely in the vertical plane via joint adaptation of BSs' tilts to the locations of the users. This approach requires the knowledge of the users' locations at all the BSs, which is usually very difficult to obtain. An alternative tilt adaptation strategy is to divide the coverage area into so-called \emph{vertical regions} and use one out of a finite number of fixed tilts at the BS to serve each region. With this tilting strategy, which is denoted here as \emph{switched-beam tilting}, it is possible to increase the received signal power at a specific region in the desired cell, suppress the ICI at certain regions in the neighboring cells, or a combination thereof. In addition, switched-beam tilting does not require any knowledge about the locations of the users as a fixed tilt is applied to serve each vertical region. The work in~\cite{ieeewc:5285317,ieeewc:5910725,ieeewc:Halbauer2011} has studied different switched-beam tilting strategies for intercell interference avoidance. This work is, however, based on system-level simulations and does not provide any design guidelines, e.g., how to form vertical regions, how to choose the optimum tilts for different regions, or how to fairly schedule the transmission over different regions. In~\cite{ieeewc:Seifi2012,ieeetwc:6807764}, the authors investigated different methods to form vertical regions and determine the optimum tilt for each region in an isolated cell, but ICI was not considered. %Furthermore, only heuristic approaches were used to determine vertical regions' boundaries and the tilt value to be applied in each region.

One of the challenges in analyzing network MIMO performance in the presence of tilt is the lack of analytical performance measures. Available techniques for the ergodic rate analysis of MIMO systems mostly assume channel vectors with \emph{independently and identically distributed} (i.i.d.) elements, which simplifies the analysis significantly (see e.g.,~\cite{COML:5671564},~\cite{ieeecl:5953530},~\cite{ieeetwc:6880856}, and references therein). Such techniques, however, cannot be directly applied in the network MIMO setting. In this case each user might experience a different pathloss to each BS, and hence the elements of its aggregate channel vector to all BSs are \emph{non-identically} distributed in general. %This introduces an intrinsic asymmetry in the users' channel vector. Especially in practical systems, where the CSI is obtained via pilot-assisted estimation, this asymmetry appears in the estimated aggregate channel vector as well as the corresponding estimation error vector. This makes the ergodic rate analysis of such systems cumbersome.
In~\cite{ieeecl:5733435,ieeecl:J.Hoydis2012,ieeewc:Muller2012}, using the results from random matrix theory, a large system approximation for the ergodic sum-rate was derived for the uplink of a network MIMO system. Closed-form approximations for the user ergodic rate in downlink network MIMO transmission were derived in~\cite{ieeecl:5953530}. A limitation of~\cite{ieeecl:5733435,ieeecl:5953530,ieeecl:J.Hoydis2012,ieeewc:Muller2012} is that they assume perfect CSI at all BSs. Ergodic rate analysis of network MIMO under imperfect CSI has been recently investigated in~\cite{ieeewc:raey,ieeewc:6503474,ieeewc:6241389}. The work in~\cite{ieeewc:raey} considered the single-user scenario in which only one user is served using network MIMO. The author further assumed eigen-beamforming at each BS that makes the beamforming design independent of the pathloss of the user to the BSs. In this case, the ergodic rate analysis is performed readily using the existing techniques for i.i.d. MIMO channels. Analytical expression for user ergodic rate under multiuser transmission was derived in~\cite{ieeewc:6503474}. The author, however, did not consider the impact of pathloss, i.e., they assumed i.i.d. elements for the aggregate channel vector between each user and all BSs.
In~\cite{ieeewc:6241389}, a lower bound for the user ergodic rate was obtained for the special scenario in which users at fixed and symmetric set of locations in different cells are served. This symmetry causes the users to be \emph{statistically equivalent}, i.e., they experience the same set of pathloss values to all the BSs, which simplifies the ergodic rate analysis~\cite{ieeewc:6095627}. In practice, however, users are usually placed in asymmetric locations, which makes the ergodic rate analysis challenging.

%In~\cite{COML:5671564}, closed-form expressions for user ergodic rate under imperfect CSI were obtained in a single-cell setup without considering pathloss. Closed-form approximations for user ergodic rates with perfect CSI were derived in~\cite{ieeecl:5953530} for a multicell setup with a realistic pathloss model.
%The work in~\cite{YunRui2011} only considers the special case where a single user with equal pathloss to all BSs is served. In this special scenario the asymmetry in the network MIMO channel disappears and the problem transforms back to the i.i.d. case.
\vspace{-12pt}
\subsection{Contributions}\label{subsec:contb}

In this paper, our main aim is to design a practical multicell cooperation strategy that exploits both horizontal and vertical planes of the wireless channel to reduce the complexity and signaling requirements of network MIMO, while achieving comparable performance. Note that,
in our previous work~\cite{ieeetwc:6881264}, we studied the adaptive multicell 3D beamforming in a small-cell network under perfect CSI assumption. In contrast, in this paper, we focus on a more practical case of imperfect CSI, which includes the previous work~\cite{ieeetwc:6881264} as a special case. We also consider a dense macro cell network, instead of a small-cell network, that better matches our assumed propagation model. The main contributions of the paper are summarized as follows.

1) \textbf{Network MIMO ergodic rate analysis with imperfect CSI}: To make the analysis computationally efficient, we extend the results of~\cite{ieeecl:5953530} to the case of imperfect CSI. We propose a novel method to approximate the \emph{non-i.i.d.} network MIMO channel with an i.i.d. MIMO channel. We use this method together with the properties of Gamma random variables (RVs) to approximate the distributions of the desired signal and multiuser intracell interference power at each user. Using these distributions, we derive an accurate analytical expression for user ergodic rate under network MIMO transmission in the presence of imperfect CSI.

2) \textbf{Adaptive multicell 3D beamforming}: We first focus on cell-specific tilting and investigate the impact of tilt on the performance of conventional single-cell transmission and cooperative multicell transmission separately. Our analysis shows that cooperative multicell transmission is the preferred transmission strategy for users close to the cell boundary. For users close to the BSs, however, conventional single-cell transmission performs as good when the BSs \emph{coordinatively} apply large tilts. Based on these results, we propose a novel hybrid multicell cooperation strategy that divides the coverage area into disjoint vertical regions and adapts the transmission mode (i.e., single-cell vs. multicell) and the corresponding beamforming strategy when serving each vertical region. Our simulation results show that the proposed technique outperforms the conventional non-cooperative transmission while achieving comparable performance to that of network MIMO transmission with a significantly reduced signaling requirement.

\vspace{-12pt}
\subsection{Organization}\label{subsec:org}

The rest of the paper is organized as follows. Section~\ref{sec:sysmod} describes the system model including antenna patterns, propagation environment and received signal, transmission modes, and channel estimation. Beamforming techniques together with the ergodic rate expressions are presented in Section~\ref{subsec:transchem}. A new i.i.d. approximation for the network MIMO channel is introduced in Section~\ref{sec:EIID}. An analytical expression for user ergodic rate under network MIMO transmission is obtained in Section~\ref{sec:ergrate}. Section~\ref{sec:cellspctilt} studies cell-specific tilting for different transmission modes. The proposed adaptive multicell 3D beamforming is presented in Section~\ref{sec:hmtscheme}. Several approaches to apply the proposed technique to more general scenarios are discussed in Section~\ref{sec:multiplecoordclust}. Finally, Section~\ref{sec:concl} concludes the paper.

\vspace{-12pt}
\subsection{Notation}\label{subsec:not}

Scalars are denoted by lower-case letters. Vectors and matrices are denoted by bold-face lower-case and upper-case letters, respectively. $(\cdot)^{\hermit}$ is the complex conjugate transpose. $\mathbb{E}[\cdot]$ denotes the statistical expectation. $|\calS|$ is the cardinality of a set $\calS$. $\|\bx\|$ denotes the Euclidean norm of a complex vector $\bx$. 
%%%%%%%%%%%%%%%%%%%%%%%%%%%%%%%%%%%%%%%%%%%%%%%
%                 SYSTEM MODEL                %
%%%%%%%%%%%%%%%%%%%%%%%%%%%%%%%%%%%%%%%%%%%%%%%
\vspace{-12pt}
\section{System Model}\label{sec:sysmod}

We consider downlink transmission in a network consisting of a cluster of $B$ adjacent cells. Each cell has a multi-antenna BS located at a height $h_{\textsf{bs}}$ above the ground. We index all the $B$ cells in the network and their associated BSs by unique indices $b=1,\ldots,B$. There are $K$ users uniformly distributed over the coverage area and uniquely indexed as $k=1,\ldots,K$. Each user is at a height $h_{\textsf{u}}$ above the ground and has a single antenna. An example of a network consisting of three romb-shaped cells is shown in Fig.~\ref{fig:lincell}. We use this network configuration as an instructive example throughout the paper without loss of generality.

\begin{figure}[t]
\begin{minipage}[t]{1.0\linewidth}
\centering
%\psfrag{cellcellcellcell}[c][][0.8]{cell}
%\psfrag{hai2}[c][][0.7]{$h_{\textsf{bs}}$}
%\psfrag{hai1}[c][][0.7]{$h_{\textsf{u}}$}
%\includegraphics[width=0.65\columnwidth]{./images/sysmod.eps}
\def\svgwidth{0.65\columnwidth}
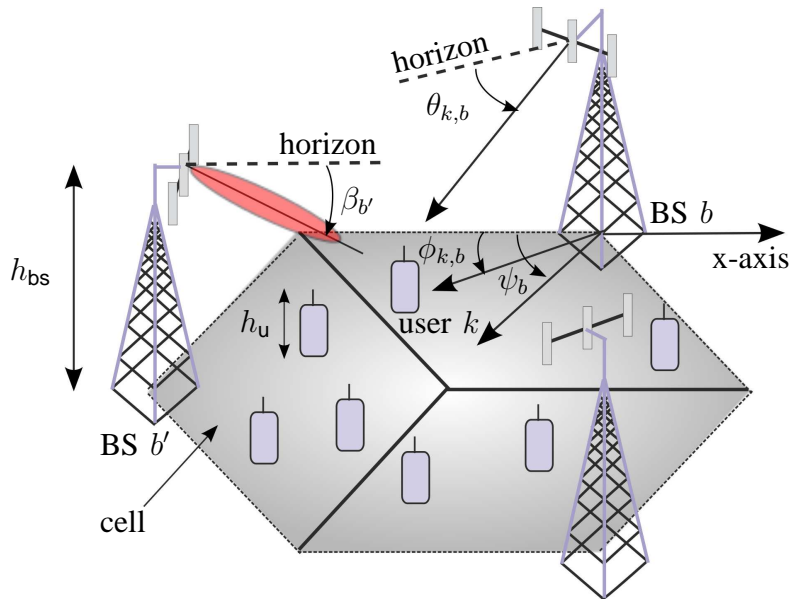
\caption{An example of a network consisting of $3$ adjacent romb-shaped cells with schematic illustration of spherical angles.}
\label{fig:lincell}
\end{minipage}
\end{figure}

\vspace{-12pt}
\subsection{Antenna Radiation Pattern}\label{sec:bsantpat}

For the antenna at the user, we assume a 3D unity-gain isotropic pattern. At the BS, we consider an array of $N_{\t}$ antennas that are arranged in a plane parallel to the ground. Each BS antenna itself comprises multiple vertically stacked radiating elements that are contained within a single radome. The pattern of each antenna depends on the number of radiating elements, their patterns, their relative positions, and their applied weights. By applying appropriate weights it is possible to control the vertical characteristics of the antenna pattern including the tilt. Here, to enable an abstraction of the role played by the radiating elements in controlling the tilt, we approximate each BS antenna pattern using the 3D directional model proposed in 3GPP~\cite[Section A.2.1.6.1]{ieeewc:3gpp}. We further assume that a common tilt is applied at all antennas of each BS. The observed antenna gain from any antenna of BS $b$ at user $k$ is expressed in dBi scale as
\begin{align}\label{eq:3dpat}
G_{k,b}^{\textsf{dBi}}(\beta_{b}) = 
-\min\left(\min\left[12\left(\frac{\phi_{k,b}-\psi_{b}}{\phi_{3\textsf{dB}}}\right)^{2},\mathrm{SLL}_{\az}\right] \right.
\left. +\min\left[12\left(\frac{\theta_{k,b} - \beta_{b}}{\theta_{3\textsf{dB}}}\right)^{2},\mathrm{SLL}_{\el}\right] \right),
\end{align}
where $\phi_{k,b}$ denotes the horizontal angle measured between the x-axis and the line in the horizontal plane connecting user $k$ to BS $b$, and $\theta_{k,b}$ is the vertical angle measured between the horizon and the line connecting user $k$ to BS $b$. In addition, $\psi_{b}$ represents the fixed orientation angle of BS $b$ array boresight relative to the x-axis, while $\beta_{b}$ denotes the tilt of BS $b$ measured between the horizon and the line passing through the peak of the main beam\footnote{Since the observed antenna gain at user $k$ is assumed to be the same from any antenna at BS $b$, for notational convenience we just use the BS index $b$ in~\eqref{eq:3dpat} instead of explicitly indexing each antenna of BS $b$.}. A schematic illustration of the spherical angles is shown in Fig.~\ref{fig:lincell}. 
Moreover, $\mathrm{SLL}_{\az}=25$ dB and $\mathrm{SLL}_{\el}=20$ dB are the side lobe levels (SLLs) in the horizontal and vertical planes of the BS antenna pattern, respectively. The half-power beamwidth (HPBW) in the horizontal and vertical planes are respectively denoted as $\phi_{3\textsf{dB}}=65^{\circ}$ and $\theta_{3\textsf{dB}}=6^{\circ}$. %Finally, $G_{\textsf{m}}=14$ dBi denotes the peak antenna gain. %Note that the model in~\eqref{eq:3dpat} is a simplified version of the commonly used \emph{Kathrein} 742215 antenna, which assumes a constant gain outside the main lobe instead of explicit side lobes~\cite{ieeewc:Gunnarsson2008}.

\vspace{-5pt}
\subsection{Propagation Environment and Received Signal Model}\label{subsec:propmod}
We focus on a typical non-line-of-sight (NLOS) propagation scenario in which the BS height is much greater than the large-scale clutter, such as buildings and trees, and the users are close to the ground (i.e., $h_{\mathsf{u}} \ll h_{\mathsf{bs}}$) and inside the clutter. In such a propagation environment, the horizontal and vertical planes of the wireless channel have different characteristics that should be taken into account when designing transmission strategies. In particular, the coverage area in the horizontal plane is relatively wider than that in the vertical plane. For example, in the romb-shaped cells in Fig. 1, the coverage area in the horizontal plane spans an angular range of $120^{\circ}$. But considering an cell radius of $150$ m, a BS height of $32$ m, and a user height of $1.5$ m, the vertical angles of the users located in $95\%$ of the cell area is less than $45^{\circ}$ and the vertical angle of a user at the cell edge is $11.5^{\circ}$. In this situation, the transmitted signal from the BS is more likely to experience a richer scattering in the horizontal plane than in the vertical plane. Hence, in this paper as a simple approach to model the propagation channel, we assume that multipath fading is rich in the horizontal plane, while it is negligible in the vertical plane. Although not fully realistic in a NLOS environment, recent detailed measurements have shown that this is a reasonable assumption when the BS is located high above the rooftop (as assumed in this paper) so that only few reflections and diffractions occur between the BS and the user in the vertical plane~\cite{ieeetc:Halbuer2013}.

Following the described propagation model, the channel between user $k$ and BS $b$ can be expressed as $\alpha_{k,b}(\beta_{b})\bh_{k,b}$. Here, $\alpha_{k,b}(\beta_{b})$ is the path gain given by~\cite{ieeewc:3gpp}
\vspace{-8pt}
\begin{equation}\label{eq:pl}
\alpha_{k,b}(\beta) = \mathrm{L}_{k,b} G_{k,b}(\beta_{b}),
\end{equation}
where $\mathrm{L}_{k,b}$ captures the distant-dependent pathloss between user $k$ and BS $b$, while $G_{k,b}(\beta_{b})$ indicates the observed antenna gain at user $k$ from BS $b$ and is given in dBi scale in~\eqref{eq:3dpat}. In addition, $\bh_{k,b} \in \bbC^{N_{\t} \times 1}$ denotes the small-scale fading channel vector between user $k$ and BS $b$. For simplicity of the analysis, we assume that the elements of $\bh_{k,b}$, $\forall k,b$, are i.i.d. $\calCN(0,1)$. 
We focus on universal frequency reuse and a narrowband frequency-flat fading channel. The complex base-band received signal at user $k$ can be expressed as
\begin{equation}\label{eq:recsig}
y_{k} = \sum_{b=1}^{B}\sqrt{\alpha_{k,b}(\beta_{b})} \, \bh_{k,b}^{\hermit}\bx_{b} + n_{k},
\end{equation}
where $\bx_{b} \in \bbC^{N_{\t} \times 1}$ is the transmitted signal from BS $b$, and $n_{k}$ indicates the normalized additive white Gaussian noise (AWGN) distributed as $\calCN(0,1)$.  

\vspace{-12pt}
\subsection{Downlink Transmission Modes}

In this work, we consider two well-known transmission modes that are adopted by the BSs when serving users:

\noindent \textbf{Conventional Single-cell Transmission (CST):} In this mode of transmission, each user is associated with one of the BSs to which it experiences the maximum average received power, denoted as the \emph{home} BS. The data to each user is transmitted by its home BS, while the transmissions from other BSs act as ICI. With no control information exchange among BSs, this transmission strategy leads to an \emph{uncoordinated} network, where adjacent cells operate independently and interfere mutually. Feasibility of control information exchange, however, facilitates \emph{coordinated} transmission at each BS to mitigate the ICI in the neighboring cells.

\noindent \textbf{Network MIMO Transmission (NMT):} In this transmission mode, the data of all users are shared among the BSs through high-speed backhaul links. The BSs then act as a single distributed multi-antenna transmitter with $BN_{\t}$ antennas to jointly serve the users in the coverage area. Under perfect CSI sharing among BSs, this transmission technique results in a fully \emph{cooperative} network in which the ICI can be completely removed.

\vspace{-12pt}
\subsection{Downlink Channel Estimation}

%We adopt a standard block-fading model~\cite{ieeewc:746779,ieeewc:6095627} in which channel vectors change independently from time-slot to time-slot. A time-slot corresponds to a block of length $T$ channel uses over which the small-scale fading coefficients are constant. We further focus on pilot-assisted downlink training and channel estimation in an FDD system. Each time-slot is divided into a training period of length $\tau$ channel uses and a data transmission period of length $T - \tau$ channel uses. At the beginning of each time-slot, each BS broadcasts some pilot symbols over a duration of $\tau$ channel uses to enable users to estimate their channel vectors to that BS.

We focus on pilot-based orthogonal channel training in all cells (corresponding to pilot reuse factor $B$) and minimum mean-square error (MMSE) channel estimation in a frequency-division duplexing (FDD) system. Under these assumptions, the canonical decomposition for the channel vector between user $k$ and BS $b$, i.e., $\bh_{k,b}$, can be expressed as~\cite{ieeewc:5773636,ieeewc:6095627}
\vspace{-8pt}
\begin{equation}\label{eq:channelest}
\bh_{k,b} = \hat{\bh}_{k,b} + \be_{k,b}.
\end{equation}
In~\eqref{eq:channelest}, $\hat{\bh}_{k,b}$ denotes the estimated channel vector of user $k$ from BS $b$ with the elements that are i.i.d. $\calCN(0,\kappa_{k,b}^{2}(\beta_{b}))$. In addition, $\be_{k,b}$ is the estimation error vector with the elements that are i.i.d. $\calCN(0,\sigma_{k,b}^{2}(\beta_{b}))$ where $\sigma_{k,b}^{2}(\beta_{b}) = 1/(1+\alpha_{k,b}(\beta_{b})BP)$ and $\kappa_{k,b}^{2}(\beta_{b}) = 1 - \sigma_{k,b}^{2}(\beta_{b})$. Each user estimates its channel vector(s) either to its home BS in CST or to all BSs in NMT and feeds back the estimated channel vector(s) to its home BS. We assume genie-aided feedback links which deliver the estimated channel vector(s) to the BSs perfectly. For the case of NMT, we further assume that these vectors are shared among BSs over error- and delay-free backhaul links to enable beamforming design.

\vspace{-10pt}
\section{Multiuser MIMO Beamforming and User Ergodic Rates}\label{subsec:transchem}

This section reviews the principles of linear multiuser MIMO zero-forcing beamforming in CST and NMT, and presents the expressions for the user ergodic rate at each transmission mode. %We resort to a popular choice for beamforming design where the estimated channel vectors at the BSs are treated as if they are perfect. Moreover, we consider equal power allocation among users to simplify the analysis\footnote{Although an important issue, spatial power allocation is beyond the scope of this paper and is left to our future work.}.} 

\vspace{-10pt}
\subsection{Beamforming and Ergodic Rates in CST}\label{subsec:bfcbt}
Let $\calK=\{1,\ldots,K\}$ denote the set of all users in the coverage area. In addition, $\calK_{b} \subseteq \calK$ is the set of users associated with BS $b$ under CST such that $\bigcup_{b=1}^{B} \calK_{b} = \calK$, $\calK_{b} \cap \calK_{b^{\prime}} = \emptyset$, $\forall b \neq b^{\prime}$, and $|\calK_{b}| \leq N_{\t}$. The last constraint will be easily satisfied in future densified networks as from one side more antennas will be deployed at each BS, i.e., larger $N_{\t}$, and from the other side fewer users will be served simultaneously per time-frequency resource block in each cell, i.e.,  smaller $|\calK_{b}|$. Now, let $\hat{\bH}_{b} \in \bbC^{N_{t} \times |\calK_{b}|}$ be the channel matrix having the estimated channel vectors of the users in cell $b$, i.e., $\{\hat{\bh}_{k,b}\}_{k \in \calK_{b}}$, as its columns. With only the knowledge of $\hat{\bH}_{b}$, BS $b$ designs the unit-norm zero-forcing beamformer for user $k \in \calK_{b}$, denoted as $\bw_{k,b} \in \calC^{N_{t} \times 1}$, such that $\hat{\bh}_{j,b}^{\hermit}\bw_{k,b} = 0$, $\forall j \neq k$ where $j \in \calK_{b}$. The transmitted signal $\bx_{b}$ in~\eqref{eq:recsigcbt} can now be expressed as
\vspace{-5pt}
\begin{equation}\label{eq:txsig}
\bx_{b} = \sum_{k \in \calK_{b}}\bw_{k,b}d_{k,b},
\end{equation}
where $d_{k,b}$ denotes the data symbol for user $k$. The transmitted signal from each BS is assumed to be subject to a power constraint $P$, i.e., $\bbE\left[\|\bx_{b}\|^{2}\right] = P$, $\forall b$. Using the canonical decomposition in~\eqref{eq:channelest}, we can now re-write the received signal for user $k \in \calK_{b}$ in~\eqref{eq:recsig} as
\vspace{-5pt}
\begin{align}\label{eq:recsigcbt}
y_{k} = & \underbrace{\sqrt{\alpha_{k,b}(\beta_{b})} \, \bh_{k,b}^{\hermit}\bw_{k,b}d_{k,b}}_\text{desired signal} + \underbrace{\sum_{\substack{ j \in \calK_{b} \\ j \neq k}}\sqrt{\alpha_{k,b}(\beta_{b})} \, \be_{k,b}^{\hermit}\bw_{j,b}d_{j,b}}_\text{intracell multiuser residual interference} \nonumber \\
& + \underbrace{\sum_{\substack{b^{\prime}=1 \\ b^{\prime} \neq b}}^{B}\sum_{\ell \in \calK_{b^{\prime}}} \sqrt{\alpha_{k,b^{\prime}}(\beta_{b^{\prime}})} \, \bh_{k,b^{\prime}}^{\hermit}\bw_{\ell,b^{\prime}}d_{\ell,b^{\prime}}}_\text{ICI} + n_{k}.
\end{align}
The signal-to-interference-plus-noise ratio (SINR) of user $k \in \calK_{b}$ is given by
\begin{align}\label{eq:sinrcbt}
&\gamma_{k,\textsf{CST}}(\bbeta) = \frac{\alpha_{k,b}(\beta_{b})\,\|\bh_{k,b}^{\hermit}\bw_{k,b}\|^{2}p_{k,b}}{1+\sum_{\substack{j \in \calK_{b} \\ j \neq k}}\alpha_{k,b}(\beta_{b}) \, \|\be_{k,b}^{\hermit}\bw_{j,b}\|^{2}p_{j,b} + \sum_{\substack{b^{\prime}=1 \\ b^{\prime} \neq b}}^{B}\sum_{\ell \in \calK_{b^{\prime}}}\alpha_{k,b^{\prime}}(\beta_{b^{\prime}}) \, \|\bh_{k,b^{\prime}}^{\hermit}\bw_{\ell,b^{\prime}}\|^{2}p_{\ell,b^{\prime}}},
\end{align}
where $p_{k,b}$ is the allocated power to user $k\in\calK_{b}$ by BS $b$ and $\bbeta = [\beta_{1}~\ldots~\beta_{B}]$ is the vector of applied tilts at all BSs.
For our tilt optimization in Section~\ref{sec:cellspctilt} (as it will be clarified later), we need to know the ergodic rate of user $k$ at any given location in cell $b$ given the path gain coefficients $\{\alpha_{k,b}(\beta_{b})\}_{b=1}^{B}$ at that location. Therefore, the desired performance metric we are interested in is the conditional ergodic rate given by
%\vspace{-5pt}
\begin{align}\label{eq:achratecbt}
R_{k,\textsf{CST}}(\bbeta) = \bbE\left[\log_{2}\left(1+\gamma_{k,\textsf{CST}}(\bbeta)\right)\bigg|\{\alpha_{k,b}(\beta_{b})\}_{b=1}^{B}\right].
\end{align}
Although optimal power allocation at each realization of small-scale fading can further improve the performance, it makes the analytical evaluation of~\eqref{eq:achratecbt} intractable. Thus, to facilitate a computationally efficient tilt optimization, we assume equal power allocation among users that enables a clean derivation of an accurate analytical expression for~\eqref{eq:achratecbt} (e.g., by using the techniques proposed in~\cite{ieeetwc:6880856,ieeecl:5953530}). We omit such a derivation here due to space limitations.

\vspace{-10pt}
\subsection{Beamforming and Ergodic Rates in NMT}\label{subsec:bfnmt} Define the aggregate channel vector $\bh_{k}$ from user $k$ to all BSs as
\begin{equation}\label{eq:comchan}
\bh_{k} = \left[\sqrt{\alpha_{k,1}(\beta_{1})} \,\bh_{k,1}^{\trans}\, ,\ldots,\, \sqrt{\alpha_{k,B}(\beta_{B})} \,\bh_{k,B}^{\trans}\right]^{\trans}.
\end{equation}
We refer to $\bh_{k}$ as the \emph{network MIMO channel vector} of user $k$ hereafter in the paper. Using the MMSE decomposition in~\eqref{eq:channelest}, the network MIMO channel vector in~\eqref{eq:comchan} can be written as
\vspace{-10pt}
\begin{equation}\label{eq:comchancandec}
\bh_{k} = \hat{\bh}_{k} + \be_{k},
\end{equation}
where $\hat{\bh}_{k}$ is the estimated network MIMO channel vector given by
\begin{equation}\label{eq:estcomchan}
\hat{\bh}_{k} = \left[\sqrt{\alpha_{k,1}(\beta_{1})} \,\hat{\bh}_{k,1}^{\trans}\, ,\ldots, \,\sqrt{\alpha_{k,B}(\beta_{B})} \,\hat{\bh}_{k,B}^{\trans}\right]^{\trans},
\end{equation}
and $\be_{k}$ denotes the network MIMO estimation error vector written as
\begin{equation}\label{eq:esterrcomchan}
\be_{k} = \left[\sqrt{\alpha_{k,1}(\beta_{1})} \,\be_{k,1}^{\trans}\, ,\ldots, \,\sqrt{\alpha_{k,B}(\beta_{B})} \,\be_{k,B}^{\trans}\right]^{\trans}.
\end{equation}
Now, let $\hat{\bH} \in \bbC^{BN_{t} \times |\calK|}$ be the channel matrix having the estimated network MIMO channel vectors of all users, i.e., $\{\hat{\bh}_{k}\}_{k \in \calK}$, as its columns. Assuming the knowledge of $\hat{\bH}$ at all BSs, the unit-norm zero-forcing beamformer $\bw_{k} \in \bbC^{BN_{t} \times 1}$ satisfies $\hat{\bh}_{j}^{\hermit}\bw_{k}=0$, for $\forall j \neq k$. 
\begin{remark}\label{rem:rem1} We notice that in CST the beamforming vectors in cell $b$ are solely determined from $\hat{\bH}_{b}$ which contains only channel vectors with i.i.d. elements. Such beamforming vectors can point in any direction in the complex space with equal probability and are commonly referred to as \emph{isotropically distributed} unit vectors~\cite{ieeewc:746779}. This phenomenon, however, does not hold in NMT because the estimated network MIMO channel vectors contain non-i.i.d. elements, resulting in \emph{non-isotropically distributed} beamforming vectors.
\end{remark}
The aggregate transmitted signal $\bx$ from all BSs can be expressed as
\vspace{-10pt}
\begin{equation}\label{eq:txsig}
\bx = \sum_{k \in \calK}\bw_{k}d_{k},
\end{equation}
where $d_{k}$ is the data symbol for user $k$. Here, we assume that $\bx$ is subject to a sum power constraint $BP$, i.e., $\bbE\left[\|\bx\|^{2}\right] = BP$, $\forall b$. While a per-BS power constraint is more relevant in practice, zero-forcing beamforming design with per-BS power constraint in NMT is computationally complex~\cite{ieeewc:5594707}. 
The received signal of user $k$ can be written as
\vspace{-10pt}
\begin{equation}\label{eq:recsignmt}
y_{k} = \underbrace{\bh_{k}^{\hermit}\bw_{k}d_{k}}_\text{desired signal} + \underbrace{\sum_{\substack{j \in \calK \\ j \neq k}}\be_{k}^{\hermit}\bw_{j}d_{j}}_\text{multiuser residual interference} + n_{k},
\end{equation}
and the SINR of user $k$ is expressed as
\begin{align}\label{eq:sinrnmt}
&\gamma_{k,\textsf{NMT}}(\bbeta) = \frac{\|\bh_{k}^{\hermit}\bw_{k}\|^{2}p_{k}}{1+\sum_{\substack{j \in \calK \\ j \neq k}}\|\be_{k}^{\hermit}\bw_{j}\|^{2}p_{j}}.
\end{align}
where $p_{k}$ is the allocated power to user $k$. Again, for tilt optimization in Section~\ref{sec:cellspctilt}, we need to compute the ergodic rate of user $k$ at any given location in the coverage area assuming the path gain coefficients $\{\alpha_{k,b}(\beta_{b})\}_{b=1}^{B}$ are known at that location. Our desired performance metric is the conditional ergodic rate defined as
\vspace{-5pt}
\begin{align}\label{eq:achratenmt}
R_{k,\textsf{NMT}}(\bbeta) = \bbE\left[\log_{2}\left(1+\gamma_{k,\textsf{NMT}}(\bbeta)\right)\bigg|\{\alpha_{k,b}(\beta_{b})\}_{b=1}^{B}\right].
\end{align}
Following the point in Remark~\ref{rem:rem1}, it holds that in NMT the beamformers at any time-slot depend on the particular realization of not only the small-scale fading, but also the path gains of all users. Therefore, for a given location of user $k$, the expectation in~\eqref{eq:achratenmt} should be taken with respect to all realizations of both the small-scale fading and the locations of other users. This makes the analytical evaluation of~\eqref{eq:achratenmt} very cumbersome if not impossible. To tackle this issue, we first assume an equal power allocation among users. We then provide a method in Section~\ref{sec:EIID} to approximate the network MIMO channel vector with an equivalent i.i.d. MIMO channel vector. Such an approximation will eliminate the asymmetry due to non-i.i.d. channels, thereby facilitating the use of existing results for i.i.d. channel vectors to compute $R_{k,\textsf{NMT}}(\bbeta)$. A review of some mathematical lemmas that prove useful in the following analysis is provided in Appendix~\ref{sec:mathprlm}.

\vspace{-12pt}
\section{Network MIMO Ergodic Rate Analysis}\label{sec:ergrate}
In this section, we derive an accurate analytical expression for the conditional ergodic rate in~\eqref{eq:achratenmt} assuming imperfect CSI.
\vspace{-12pt}
\subsection{A New I.I.D. Approximation for Network MIMO Channels}\label{sec:EIID}
We first propose a new method in which each user interprets its network MIMO channel vector as an i.i.d. channel vector with an equivalent path gain and an equivalent \emph{effective} degrees of freedom (DoF) per spatial dimension (to be defined later). Using~\eqref{eq:comchan}, we can write $\bh_{k}^{\hermit}\bh_{k} = \sum_{b=1}^{B}\alpha_{k,b}(\beta_{b})\bh_{k,b}^{\hermit}\bh_{k,b}$. It is well-known that $\bh_{k,b}^{\hermit}\bh_{k,b} = \|\bh_{k,b}\|^{2}$ is a chi-square RV with $2N_{\t}$ DoF scaled with $1/2$. Now, using Lemma~\ref{fact1} it holds that $\alpha_{k,b}(\beta_{b})\bh_{k,b}^{\hermit}\bh_{k,b} \sim \Gamma(N_{t},\alpha_{k,b}(\beta_{b}))$. Furthermore, using the independence of $\bh_{k,b}$ and $\bh_{k,b^{\prime}}$, $\forall b \neq b^{\prime}$, it follows that $\bh_{k}^{\hermit}\bh_{k}$ is a sum of independent Gamma RVs. According to Lemma~\ref{prop1}, any Gamma RV with the shape parameter $\mu_{k,\textsf{a}}(\bbeta)$ and the scale parameter $\theta_{k,\textsf{a}}(\bbeta)$, which are defined as
\begin{equation}\label{eq:eiidu}
\mu_{k,\textsf{a}}(\bbeta)  = N_{\t}\frac{(\sum_{b=1}^{B}\alpha_{k,b}(\beta_{b}))^{2}}{\sum_{b=1}^{B}\alpha_{k,b}^{2}(\beta_{b})}~~\text{and}~~\theta_{k,\textsf{a}}(\bbeta) = \frac{\sum_{b=1}^{B}\alpha_{k,b}^{2}(\beta_{b})}{\sum_{b=1}^{B}\alpha_{k,b}(\beta_{b})},
\end{equation}
has the same first and second moments as $\bh_{k}^{\hermit}\bh_{k}$. It can be easily shown that $N_{\t} \leq \mu_{k,\textsf{a}}(\bbeta) \leq BN_{\t}$, where the upper bound becomes exact when $\alpha_{k,1}(\beta_{1}) = \alpha_{k,2}(\beta_{2}) = \cdots = \alpha_{k,B}(\beta_{B})$, while the lower bound is attained when only one of the $\{\alpha_{k,b}(\beta_{b})\}_{b=1}^{B}$ is non-zero. Now, to define an equivalent i.i.d. channel for user $k$, we present a heuristic interpretation of $\mu_{k,\textsf{a}}(\bbeta)$ and $\theta_{k,\textsf{a}}(\bbeta)$.

Let $\bh_{k,\textsf{a}} \in \bbC^{BN_{\t} \times 1}$ be a channel vector for user $k$ with elements that are i.i.d. $\calCN(0,\theta_{k,\textsf{a}}(\bbeta))$. This can be looked at as if user $k$ is experiencing an i.i.d. channel with an equal path gain of $\theta_{k,\textsf{a}}(\bbeta)$ to all the coordinating BSs. Next, to incorporate the parameter $\mu_{k,\textsf{a}}(\bbeta)$ into the definition of $\bh_{k,\textsf{a}}$, we note that in the case where the user has an equal path gain to all the BSs, i.e., when $\theta_{k,\textsf{a}}(\bbeta)=\alpha_{k,1}(\beta_{1}) = \alpha_{k,2}(\beta_{2}) = \cdots = \alpha_{k,B}(\beta_{B})$, it holds that $\mu_{k,\textsf{a}}(\bbeta) = BN_{t}$ and hence $\bh_{k,\textsf{a}}^{\hermit}\bh_{k,\textsf{a}} \sim \Gamma(BN_{t},\theta_{k,\textsf{a}}(\bbeta))$. This means that from the perspective of this user each spatial dimension (or antenna) contributes one unit to the shape parameter of the resulting Gamma distribution of $\bh_{k,\textsf{a}}^{\hermit}\bh_{k,\textsf{a}}$. In the general case where the user experiences unequal path gains to the BSs, we have $N_{t} \leq \mu_{k,\textsf{a}}(\bbeta) < BN_{t}$. This can be looked at as if each spatial dimension in the channel offers $\frac{\mu_{k,\textsf{a}}(\bbeta)}{BN_{t}}$ ($\frac{1}{B} \leq \frac{\mu_{k,\textsf{a}}(\bbeta)}{BN_{t}} < 1$) unit to the shape parameter of the resulting Gamma distribution of $\bh_{k,\textsf{a}}^{\hermit}\bh_{k,\textsf{a}}$. We denote the parameter $\frac{\mu_{k,\textsf{a}}(\bbeta)}{BN_{t}}$ as the \emph{effective} DoF per spatial dimension at a given user location. Note that the effective DoF is just a notion introduced here to simplify our ergodic rate analysis and is completely different from the information theoretical DoF used in the MIMO system context (see e.g.~\cite{COML:5671564}). So from now on, we replace $\bh_{k}$ with $\bh_{k,\textsf{a}}$ in our analysis, i.e., we work with i.i.d. channel vectors, but whenever we need to consider the shape parameter for a Gamma RV in the ergodic rate analysis we consider the effective DoF per spatial dimension. This will be clarified in more details in the next section.

\vspace{-10pt}
\subsection{Conditional Ergodic Rate in NMT}\label{subsec:ergrateNMT}

To find an analytical expression for $R_{k,\textsf{NMT}}(\bbeta)$, we first replace the channel vector $\bh_{k}$ of user $k$ by its corresponding i.i.d. approximation $\bh_{k,\textsf{a}}$, $\forall k$, as defined in Section~\ref{sec:EIID}. With this replacement, we can perform the analysis in a transformed network that has a ``super'' cell with a layout equal to the whole coverage area of the original network (e.g., the hexagon in Fig.~\ref{fig:lincell}). The super cell consists of a ``super'' BS with $BN_{\t}$ antennas and $K$ users at the same locations as in the original network. The elements of the channel vector between user $k$ and the $BN_{\t}$ antennas of the super BS are i.i.d. $\calCN(0,\theta_{k,\textsf{a}}(\bbeta))$. In such a setup, the non-i.i.d. nature of the original network MIMO channel is captured via the notion of effective DoF per spatial dimension. 

Now, applying the well-known MMSE channel estimation model on $\bh_{k,\textsf{a}}$, we obtain a canonical decomposition as
%\vspace{-5pt}
\begin{equation}\label{eq:channelestagg}
\bh_{k,\textsf{a}} = \hat{\bh}_{k,\textsf{a}} + \be_{k,\textsf{a}}.
\end{equation}
In~\eqref{eq:channelestagg}, $\be_{k,\textsf{a}}$ is the estimation error vector with the elements that are i.i.d. $\calCN(0,\sigma_{k,\textsf{a}}^{2}(\bbeta))$, where $\sigma_{k,\textsf{a}}^{2}(\bbeta)=\theta_{k,\textsf{a}}(\bbeta)/(1+BP\theta_{k,\textsf{a}}(\bbeta))$ and $\hat{\bh}_{k,\textsf{a}}$ is the estimated channel vector with the elements that are i.i.d. $\calCN(0,\kappa_{k,\textsf{a}}^{2}(\bbeta))$, where $\kappa_{k,\textsf{a}}^{2}(\bbeta)=\theta_{k,\textsf{a}}(\bbeta)-\sigma_{k,\textsf{a}}^{2}(\bbeta)$.
Under these assumptions, the beamformer $\bw_{k,\textsf{a}}$, obtained by projection of vector $\hat{\bh}_{k,\textsf{a}}$ on the nullspace of the vectors $\{\hat{\bh}_{i,\textsf{a}}: \forall i \neq k\}$, is an isotropically distributed unit-norm vector. To evaluate~\eqref{eq:achratenmt}, we need the distributions of the desired signal term, i.e., $P_{k,\textsf{NMT}}^{\textsf{DS}}(\bbeta)=\|\bh_{k}^{\hermit}\bw_{k}\|^{2}BP/|\calK|$, and the multiuser residual interference term, i.e., $P_{k,\textsf{NMT}}^{\textsf{MRI}}(\bbeta)=\sum_{\substack{j \in \calK \\ j \neq k}}\|\be_{k}^{\hermit}\bw_{j}\|^{2}BP/|\calK|$. The authors in~\cite{ieeecl:5953530} have used a heuristic approach to approximate the distribution of $P_{k,\textsf{NMT}}^{\textsf{DS}}(\bbeta)$ with a Gamma distribution under perfect CSI. This approach, however, can not be directly extended to the case of imperfect CSI. In our proposed i.i.d. approximation, however, $\bh_{k,\textsf{a}}$, $\bw_{k,\textsf{a}}$, and $\be_{k,\textsf{a}}$ all have i.i.d. elements. This enables us to approximate the distributions of $\|\bh_{k}^{\hermit}\bw_{k}\|^{2}$ and $\|\be_{k}^{\hermit}\bw_{j}\|^{2}$ using respectively the distributions of $\|\bh_{k,\textsf{a}}^{\hermit}\bw_{k,\textsf{a}}\|^{2}$ and $\|\be_{k,\textsf{a}}^{\hermit}\bw_{j,\textsf{a}}\|^{2}$ together with the notion of effective DoF per spatial dimension as explained in the following.\\
\noindent\textbf{Desired signal term in NMT:} Using~\eqref{eq:channelestagg}, we extend the term $\|\bh_{k,\textsf{a}}^{\hermit}\bw_{k,\textsf{a}}\|^{2}$ as
%\vspace{-5pt}
\begin{align}\label{eq:dst}
\|\bh_{k,\textsf{a}}^{\hermit}\bw_{k,\textsf{a}}\|^{2} = \|(\hat{\bh}_{k,\textsf{a}}+\be_{k,\textsf{a}})^{\hermit}\bw_{k,\textsf{a}}\|^{2} \overset{(a)}{\approx}  \|\hat{\bh}_{k,\textsf{a}}^{\hermit}\bw_{k,\textsf{a}}\|^{2},
\end{align}
where $(a)$ follows by neglecting $\be_{k,\textsf{a}}^{\hermit}\bw_{k,\textsf{a}}$ as it is insignificant compared to $\hat{\bh}_{k,\textsf{a}}^{\hermit}\bw_{k,\textsf{a}}$. This is because for practical values of $\theta_{k,\textsf{a}}(\bbeta)BP$, $\forall k\in \calK$, we have $\sigma_{k,\textsf{a}}^{2}(\bbeta) \ll \kappa_{k,\textsf{a}}^{2}(\bbeta)$. Using Lemma~\ref{lem1:lem1}, it holds that $\hat{\bh}_{k,\textsf{a}}^{\hermit}\bw_{k,\textsf{a}}$ is equivalent to another vector of dimension $BN_{t}-|\calK|+1$ with the elements that are i.i.d. $\calCN(0,\kappa_{k,\textsf{a}}^{2}(\bbeta))$, and hence $\|\bh_{k,\textsf{a}}^{\hermit}\bw_{k,\textsf{a}}\|^{2} \sim \Gamma(BN_{t}-|\calK|+1,\kappa_{k,\textsf{a}}^{2}(\bbeta))$. We also note that the effective DoF per spatial dimension for user $k$ is equal to $\mu_{k,\textsf{a}}(\bbeta)/BN_{\t}$. Now, we propose to approximate the distribution of $\|\bh_{k}^{\hermit}\bw_{k}\|^{2}$ with $\Gamma((BN_{t}-|\calK|+1)\frac{\mu_{k,\textsf{a}}(\bbeta)}{BN_{\t}},\kappa_{k,\textsf{a}}^{2}(\bbeta))$, where the shape parameter is obtained by multiplying the shape parameter of the distribution of $\|\bh_{k,\textsf{a}}^{\hermit}\bw_{k,\textsf{a}}\|^{2}$, i.e., $BN_{t}-|\calK|+1$, with the effective DoF per spatial dimension, i.e., $\mu_{k,\textsf{a}}(\bbeta)/BN_{\t}$.
Next, using Lemmas~\ref{fact2}, it follows that $P_{k,\textsf{NMT}}^{\textsf{DS}}(\bbeta) \sim \Gamma(\mu_{k,\textsf{NMT}}^{\textsf{DS}}(\bbeta),\theta_{k,\textsf{NMT}}^{\textsf{DS}}(\bbeta))$, where
%\vspace{-5pt}
\begin{align}\label{eq:despar}
\mu_{k,\textsf{NMT}}^{\textsf{DS}}(\bbeta)= \left(BN_{\t}-|\calK|+1\right)\frac{\mu_{k,\textsf{a}}(\bbeta)}{BN_{\t}},~\theta_{k,\textsf{NMT}}^{\textsf{DS}}(\bbeta) = \frac{\kappa_{k,\textsf{a}}^{2}(\bbeta)BP}{|\calK|}.
\end{align}
\noindent\textbf{Multiuser residual interference term in NMT:} Using the independence of $\be_{k,\textsf{a}}$ and $\bw_{j,\textsf{a}}$, $\forall j\neq k$, and Lemma~\ref{lem1:lem1}, it holds that $\be_{k,\textsf{a}}^{\hermit}\bw_{j,\textsf{a}}$ is equivalent to another vector of dimension $1$ with an element distributed as $\calCN(0,\sigma_{k,\textsf{a}}^{2}(\bbeta))$, and hence $\|\be_{k,\textsf{a}}^{\hermit}\bw_{j,\textsf{a}}\|^{2} \sim \Gamma(1,\sigma_{k,\textsf{a}}^{2}(\bbeta))$. Noting that the effective DoF per spatial dimension is $\mu_{k,\textsf{a}}(\bbeta)/BN_{t}$, similar to the case of desired signal term, we approximate the distribution of $\|\be_{k}^{\hermit}\bw_{j}\|^{2}$ with $\Gamma(\frac{\mu_{k,\textsf{a}}(\bbeta)}{BN_{\t}},\sigma_{k,\textsf{a}}^{2}(\bbeta))$. Therefore, $P_{k,\textsf{NMT}}^{\textsf{MRI}}(\bbeta)$ is approximated as a sum of independent Gamma RVs with the same scale parameter. From Lemmas~\ref{fact2} and~\ref{fact3}, it results that $P_{k,\textsf{NMT}}^{\textsf{MRI}}(\bbeta)  \sim \Gamma(\mu_{k,\textsf{NMT}}^{\textsf{MRI}}(\bbeta),\theta_{k,\textsf{NMT}}^{\textsf{MRI}}(\bbeta))$, where
\begin{align}\label{eq:despar}
\mu_{k,\textsf{NMT}}^{\textsf{MRI}}(\bbeta) = \frac{(|\calK|-1)\mu_{k,\textsf{a}}(\bbeta)}{BN_{\t}},~~~\theta_{k,\textsf{NMT}}^{\textsf{MRI}}(\bbeta) = \frac{\sigma_{k,\textsf{a}}^{2}(\bbeta)BP}{|\calK|}.
\end{align}
The conditional ergodic rate in~\eqref{eq:achratenmt}, is now written as
\begin{align}\label{eq:achratenmtfin}
R_{k,\textsf{NMT}}(\bbeta) \approx &\,\bbE\left[\log_{2}(1+P_{k,\textsf{NMT}}^{\textsf{DS}}(\bbeta)+P_{k,\textsf{NMT}}^{\textsf{MRI}}(\bbeta))\bigg|\{\alpha_{k,b}(\bbeta)\}_{b=1}^{B}\right] \nonumber \\
&- \bbE\left[\log_{2}(1+P_{k,\textsf{NMT}}^{\textsf{MRI}}(\bbeta))\bigg|\{\alpha_{k,b}(\bbeta)\}_{b=1}^{B}\right].
\end{align}
To derive an analytical expression for~\eqref{eq:achratenmtfin}, we first use Lemma~\ref{prop1} to approximate $P_{k,\textsf{NMT}}^{\textsf{DS}}(\bbeta)+P_{k,\textsf{NMT}}^{\textsf{MRI}}(\bbeta)$ with another Gamma RV. After that, both expectation terms on the right hand side of~\eqref{eq:achratenmtfin} can be easily computed using Lemma~\ref{lemmeij}.

\vspace{-10pt}
\subsection{Numerical Example}\label{sec:numexp}
In this section, we verify the accuracy of the derived analytical expressions for CST and NMT via Monte-Carlo simulation. Our simulation parameters are set as follows. We use a 3D unit-gain isotropic pattern for each BS antenna. The cell radius, defined as the distance from the BS to one of the vertices of the romb-shaped cell, is set to $D=150$ m. BS and user heights are chosen as $h_{\textsf{bs}}=32$ m and $h_{\textsf{u}}=1.5$ m, respectively. For the pathloss factor $\mathrm{L}_{k,b}$ we use a standard distance-dependent model given by $\left(\frac{d_{k,b}}{D_{0}}\right)^{-\upsilon}$. Here, $d_{k,b}$ denotes the distance (in m) between user $k$ and BS $b$, accounting for the BS and the user heights, $D_{0}$ is a reference distance which is set to $1$ m, and $\upsilon$ is the pathloss exponent which is set to $3.76$. We further define the cell-edge SNR to be the SNR experienced at the edge of an isolated romb-shaped cell assuming maximum antenna gain. Throughout the paper, we set $N_{\t}=8$ and choose the BS transmit power $P$ so that the cell-edge SNR is $10$ dB. 

We move a sample user over the line segment connecting one of the BSs to the center of the hexagon in Fig.~\ref{fig:lincell}. For each location of the sample user, $|\calK|-1$ other users are uniformly distributed over the coverage area such that there are $|\calK_{1}|=|\calK_{2}|=|\calK_{3}|=6$ users in each cell. For CST, the ergodic rate of the sample user at a given location is obtained by averaging the instantaneous rate over $1000$ realizations of the small-scale fading for one random drop of the other $|\calK|-1$ users. For NMT, in addition to averaging over small-scale fading, we also perform another averaging over $100$ drops of the other $|\calK|-1$ users.

\begin{figure}[t]
\begin{minipage}[t]{1.0\linewidth}
\centering
\psfrag{xlabel}[c][b][0.8]{Distance from the BS (m)}
\psfrag{ylabel}[c][t][0.8]{Conditional ergodic rate [bps/Hz]}
%\psfrag{s4}[c][0.8]{\tiny STM=4}
\includegraphics[width=0.6\columnwidth]{./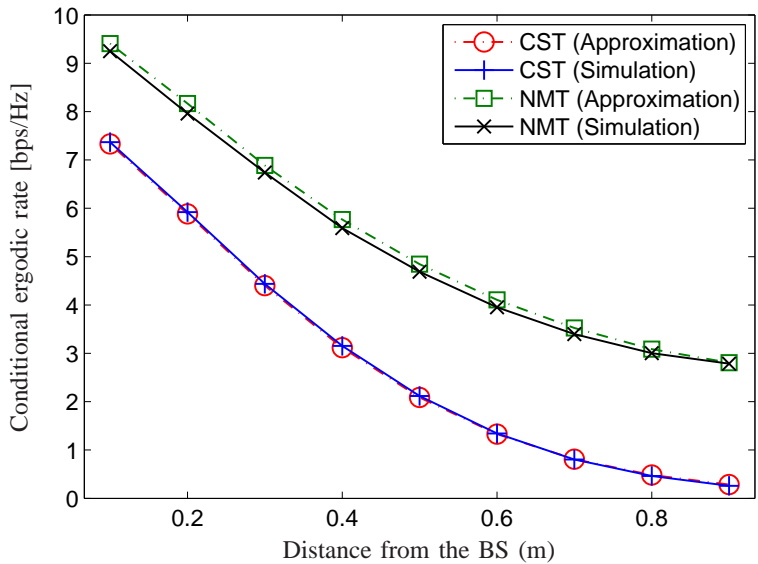}
\caption{Validation of the conditional ergodic rate approximations for a sample user moving on the line segment that connects a sample BS to the center of the hexagon in Fig.~\ref{fig:lincell}.}
\label{fig:ergrate}
\end{minipage}
\end{figure}

Figure~\ref{fig:ergrate} compares the conditional ergodic rate of the sample user obtained using the derived analytical expression and the Monte Carlo simulation for both CST and NMT. It can be easily seen that the match between theory and simulation is remarkably tight in CST. In NMT, a small mismatch is observed especially in areas close to the BS that can be attributed to the proposed i.i.d. approximations. We also mention that the match between theory and simulation in both CST and NMT is preserved when we change the number of users $|\calK|$. We, however, omit these results for brevity.

\vspace{-15pt}
\section{Cell-Specific Tilting in CST and NMT}\label{sec:cellspctilt}

In this section, we investigate the performance of the network in Fig.~\ref{fig:lincell} under cell-specific tilting. In this tilting strategy the applied tilts at the BSs are fixed at all times and does not adapt to the particular locations of users. Such tilts are usually found by maximizing some desired statistical performance metric which is independent of the particular realization of the users' locations. Adapting the tilt to the locations of the users, commonly referred to as \emph{user-specific tilting}, can potentially outperform cell-specific tilting~\cite{ieeetwc:6807764}. User-specific tilting, however, requires knowledge about users' locations (e.g., vertical angles), which is difficult to obtain. In this paper, we only focus on cell-specific tilting and leave the study of user-specific tilting to future work.

One popular approach for finding optimum tilts is the so-called \emph{throughput analysis}~\cite{ieeewc:5493599}. In this approach, three different performance metrics are used, namely the \emph{edge} throughput, \emph{average} throughput, and \emph{peak} throughput defined respectively as the 5-percentile, 50-percentile, and 95-percentile of the throughput cumulative distribution function (CDF) over the cell area. The throughput distribution for any given tilt is obtained by sampling the coverage area using a fine grid of user locations and computing the user throughput at each location by employing the analytical expressions derived in the previous section.

We let our simulation parameters follow those in Section~\ref{sec:numexp}. The main simulation assumptions are summarized in Table~\ref{table:simass}. Because of the symmetry in the considered network, we expect the optimal tilts for all BSs to be the same. Therefore, to reduce the search space we only consider the case where all BSs apply the same tilt, i.e.,  $\beta_{1}=\cdots=\beta_{B}=\beta$. 
We consider each transmission mode (i.e., CST\footnote{In this section, we focus on uncoordinated CST in which BSs operate independently.} or NMT) separately and save the throughput distributions for different values of $\beta$. We then use these throughput distributions to plot the edge, average, and peak throughput versus tilt as shown respectively in Figures~\ref{fig:edgethr},~\ref{fig:avethr}, and~\ref{fig:peakthr}.

\begin{table}[t]
\caption{Main simulation assumptions}
\centering
\begin{tabular}{l l l}
% centered columns (4 columns)
\hline
& Parameter & Modeling/value \\[0.5ex]
% inserts table
%heading
\hline % inserts single horizontal line
\multirow{5}{*}{\minitab[c]{Basic simulation \\ parameters}} & Cellular layout & Network model in Fig.~\ref{fig:lincell}\\
 & BS height, $h_{\textsf{bs}}$ & 32 m\\              % inserting body of the table
& User height, $h_{\textsf{u}}$ & 1.5 m\\
& Cell radius, $D$ & 150 m \\
& Pathloss, $L_{k,b}$ & $\left(\frac{d_{k,b}}{D_{0}}\right)^{-\upsilon}$, $D_{0}=1$, $\upsilon=3.76$\\
\hline
\multirow{4}{*}{\minitab[c]{3GPP antenna \\ parameters}} & Horizontal HPBW, $\phi_{3\textsf{dB}}$  & $65^{\circ}$ \\
& Vertical HPBW, $\theta_{3\textsf{dB}}$  & $6^{\circ}$ \\
& Horizontal SLL, $\mathrm{SLL}_{\az}$  & $25$ dB \\
& Vertical SLL, $\mathrm{SLL}_{\el}$  & $20$ dB \\[1ex]
\hline
%inserts single line
\end{tabular}
\label{table:simass}
% is used to refer this table in the text
\end{table}		

As can be seen in Fig.~\ref{fig:edgethr}, the maximum edge throughput is attained at $\beta=16^{\circ}$ for CST and $\beta=10^{\circ}$ for NMT. The edge throughput is mainly determined by users close to the cell edge. Therefore, to maximize the throughput of these users, the peak of the beam should be pointed more towards the edge of the cell. Moreover, we observe that the edge throughput at optimum tilt is significantly improved (by about $150\%$) in NMT compared to CST, showing that NMT is the best of the two transmission modes for edge throughput maximization.

The maximum average throughput is attained at $\beta=18^{\circ}$ for CST, while it is reached at $\beta=16^{\circ}$ for NMT as observed in Fig.~\ref{fig:avethr}. The larger tilt in CST is required to suppress the ICI, which is non-existent in NMT. When each transmission mode is operating at its optimum tilt, we observe that NMT improves the average throughput by $30\%$ compared to CST. 

In Fig.~\ref{fig:avethr}, we observe two peaks on the curves related to NMT. The first peak (on the left) occurs at a smaller tilt and corresponds to the scenario where there is an overlap among the main beams from all the BSs. In this case, users in the middle of each cell are not necessarily close to the peak of the main beam of any of the BSs. These users receive a significant part of their desired signal power from the neighboring BSs. The second peak (on the right) takes place at a larger tilt and relates to the case where the main beam from each BS is inside its own cell and is pointing directly to the users in the middle of the cell. Such users are close to the peak of the main beam of their home BS, and hence receive a major part of their desired signal from that BS. Interestingly, the second peak is larger than the first one, which shows the importance of tilt optimization in achieving the maximum performance gain in network MIMO. %Similar explanations can be given for the peaks observed in Fig.~\ref{fig:edgethr}, which is omitted here for brevity.

\begin{figure}[t]
\begin{minipage}[t]{0.5\textwidth}
\centering
\psfrag{xlabel}[c][b][0.8]{Tilt (degree)}
\psfrag{ylabel}[c][t][0.8]{Edge throughput [bps/Hz]}
%\psfrag{s4}[c][0.8]{\tiny STM=4}
\includegraphics[width=1.0\linewidth]{./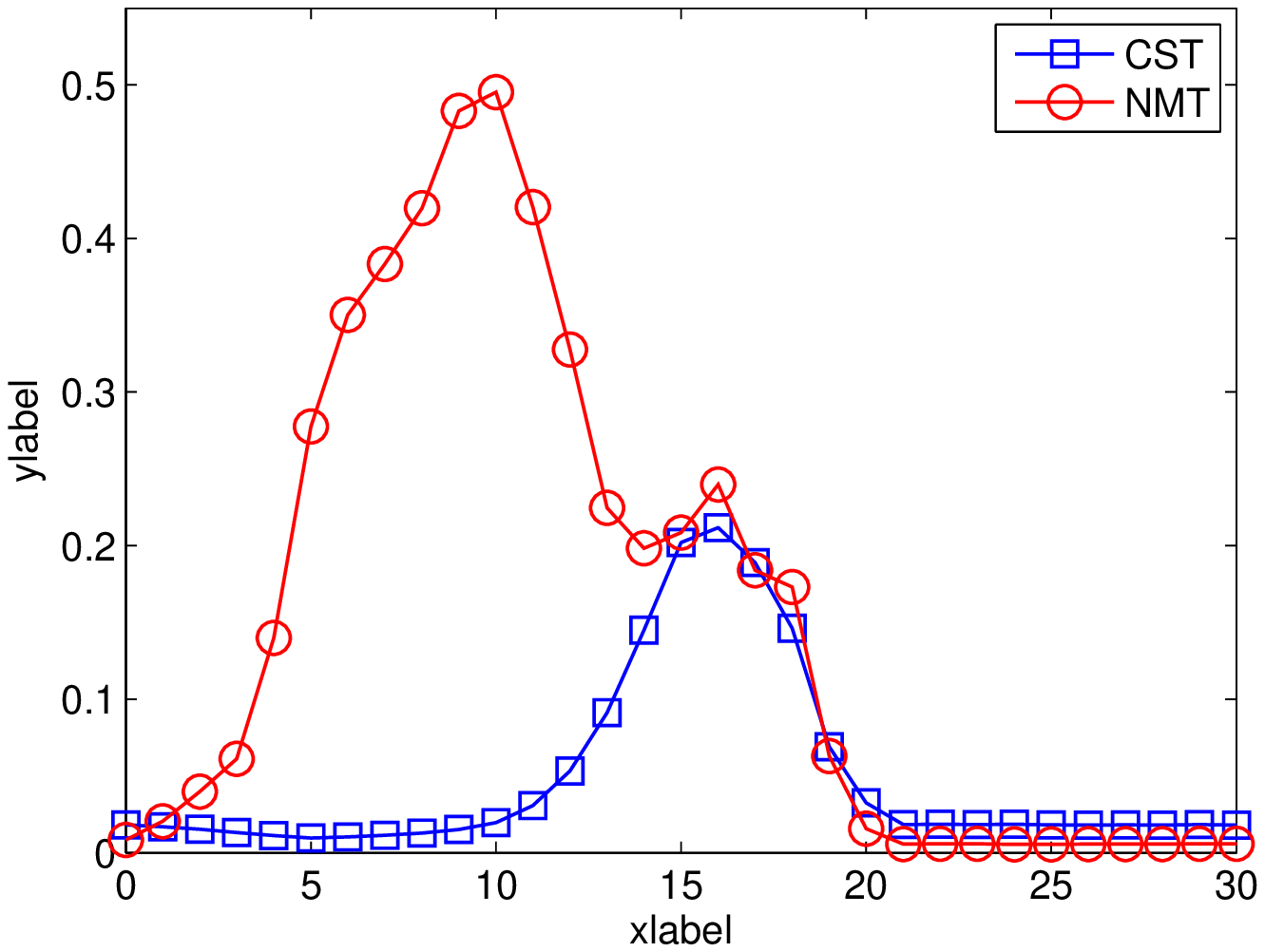}
\caption{Edge throughput comparison for CST and NMT.}
\label{fig:edgethr}
\end{minipage}
\begin{minipage}[t]{0.5\textwidth}
\centering
\psfrag{xlabel}[c][b][0.8]{Tilt (degree)}
\psfrag{ylabel}[c][t][0.8]{Average throughput [bps/Hz]}
%\psfrag{s4}[c][0.8]{\tiny STM=4}
\includegraphics[width=1.0\linewidth]{./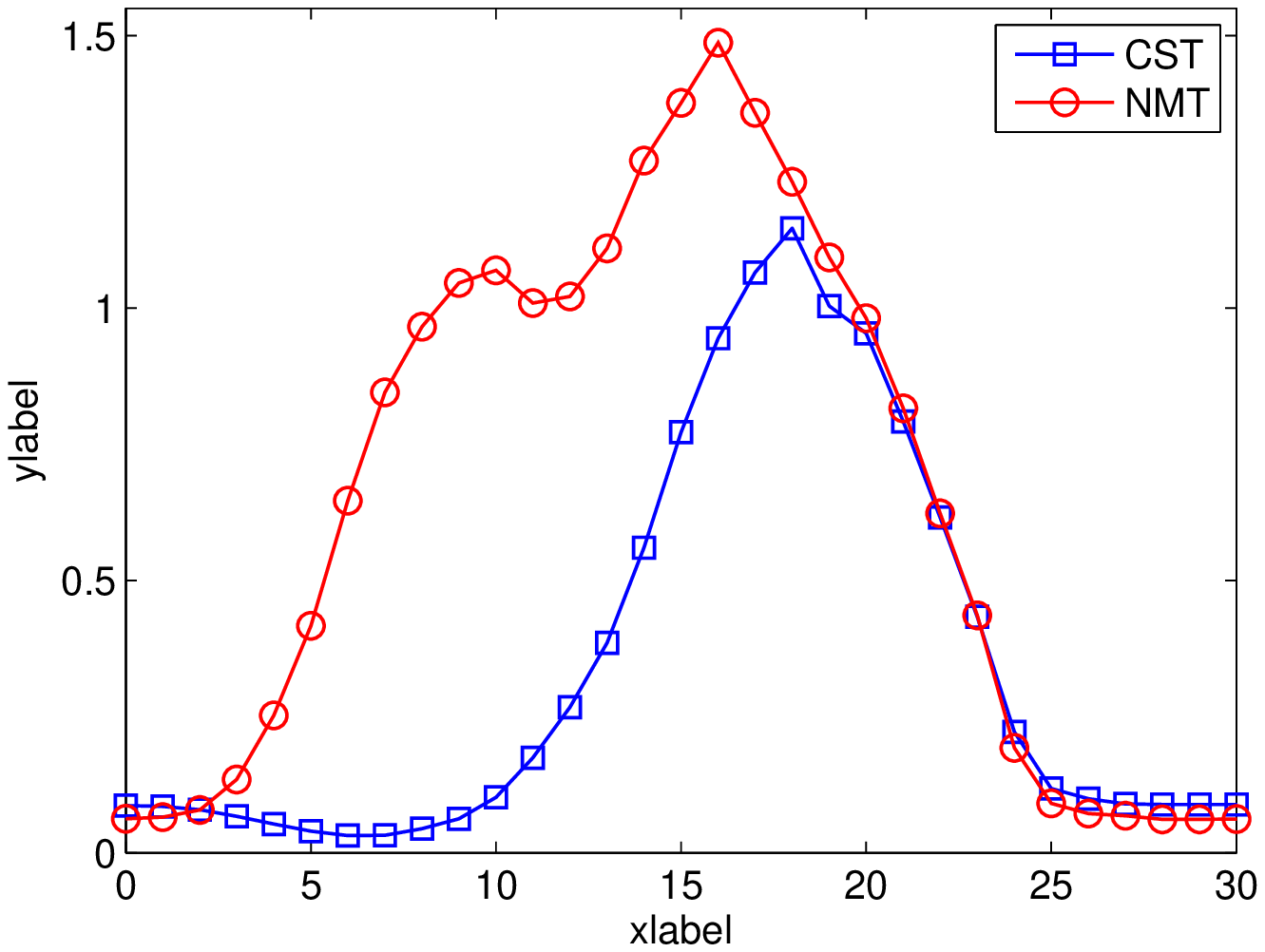}
\caption{Average throughput comparison for CST and NMT.}
\label{fig:avethr}
\end{minipage}
%\hskip 5pt
\end{figure}
\begin{figure}[h]
\centering
\begin{minipage}[t]{0.5\linewidth}
\centering
\psfrag{xlabel}[c][b][0.8]{Tilt (degree)}
\psfrag{ylabel}[c][t][0.8]{Peak throughput [bps/Hz]}
%\psfrag{s4}[c][0.8]{\tiny STM=4}
\includegraphics[width=1.0\columnwidth]{./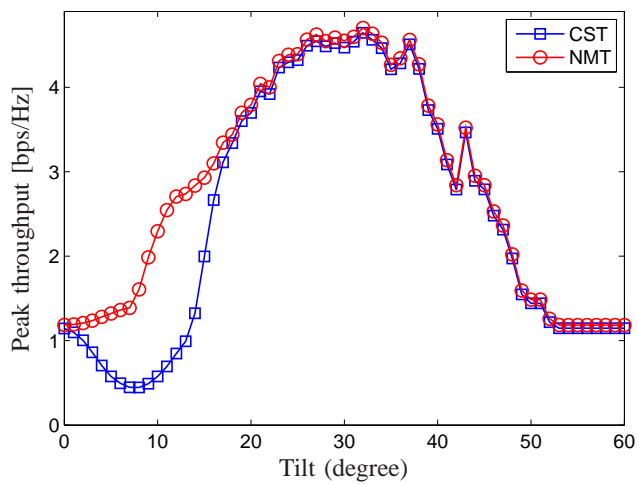}
\caption{Peak throughput comparison for CST and NMT.}
\label{fig:peakthr}
\end{minipage}
\end{figure}

The peak throughput is usually attained by users close to the BS. So we expect the maximizing tilt for peak throughput to be larger than that for average throughput. This is verified in Fig.~\ref{fig:peakthr}, where we see that the optimum tilt for both CST and NMT is $32^{\circ}$. Interestingly, the peak throughput performance of CST is almost the same as that of NMT for a tilt greater than $18^{\circ}$. We note that the allocated power to each user is the same in both CST and NMT. In addition, large tilt already provides enough protection against ICI for users close to the BS. Hence, it seems that serving such a user with a given power from the closest BS with the strongest channel in the presence of sufficiently suppressed ICI performs equally good as serving it with the same power from multiple geographically BSs with disparate channel strengths.

Notice that Figures~\ref{fig:edgethr},~\ref{fig:avethr}, and~\ref{fig:peakthr} have been plotted assuming $6$ users per cell. This choice is motivated by the results in~\cite{COML:5671564} indicating that the number of users per cell for multiuser MIMO transmission should be less than $N_{\t}$ in the presence of imperfect CSI. Our results from separate simulations, however, show that the number of users has a negligible effect on the optimum tilts for different performance metrics. These results are omitted here due to space limitations.

One important highlight here is that with cell-specific tilting, it is not possible to optimize all the performance metrics, i.e., edge, average, and peak throughput, at the same time as each of these metrics is maximized at a different tilt. One low-complexity solution to address this issue is switched-beam tilting in which at each time-slot one out of a set of finite tilts is applied at the BSs to increase the received signal power at a specific region in the desired cell, suppress the ICI at certain regions in the neighboring cells, or a combination thereof. In the next section, we exploit the idea of switched-beam tilting and propose a novel transmission strategy that is capable of achieving a \emph{tradeoff} in maximizing all performance metrics simultaneously.

\vspace{-12pt}
\section{Adaptive Multicell 3D Beamforming}\label{sec:hmtscheme}

In the previous section, NMT was shown to be the best transmission mode for edge throughput maximization. Furthermore, the peak throughput performance of CST was shown to be almost as good as NMT for tilts greater than some threshold ($18^{\circ}$ for the considered scenario). For average throughput, we observed that NMT has moderate superiority over CST.

The aforementioned argument eventually leads to the following hypothesis: a multicell cooperation strategy that would serve the users in different regions of the cell, namely, the cell-interior region or the cell-edge region, with an appropriate transmission mode, i.e., CST or NMT, and a corresponding appropriate tilt could potentially achieve a tradeoff in maximizing all the performance metrics simultaneously. Hence, we propose a hybrid multicell cooperation strategy, denoted as \emph{adaptive multicell 3D beamforming}, with the following components:
\begin{enumerate}
\item A division of the coverage area in Fig.~\ref{fig:lincell} into two disjoint \emph{vertical regions} as follows: i) a cell-interior region that consists of three disjoint vertical regions each associated with one of the BSs. The vertical region in cell $b$ is obtained as the intersection of the coverage area with a circle of radius $D_{b,\textsf{int}}$ centered at BS $b$; ii) a cell-edge region that is shared among all BSs and is obtained by removing the cell-interior region from the coverage area. This is illustrated in Fig.~\ref{fig:hybmod}.

\item A transmission technique that at each time-slot serves either the cell-interior region using CST or the cell-edge region using NMT. We emphasize that only one of these transmission modes can be active at each time-slot. A switched-beam tilting strategy is also employed that applies at each BS $b$, $\forall b$, a fixed tilt $\beta_{b,\textsf{CBT}}$ when serving the cell-interior region (see Fig.~\ref{fig:hybmodcbt}), or a fixed tilt $\beta_{b,\textsf{NMT}}$ when serving the cell-edge region (see Fig.~\ref{fig:hybmodnmt}). In line with the definition of cell-specific tilting, we denote this switched-beam tilting strategy as \emph{region-specific tilting}, as the tilts applied at the BSs to serve each region are independent of the particular locations of the users in that region.

\item  A scheduler to share the available time-slots between the cell-interior region and the cell-edge region. We define the cell-interior region activity factor $\nu_{\textsf{CST}}$ ($0\leq\nu_{\textsf{CST}}\leq1$) as the fraction of the total time-slots in which the cell-interior region is served. Similarly, $\nu_{\textsf{NMT}}$ ($0\leq\nu_{\textsf{NMT}}\leq1$) is the cell-edge region activity factor such that $\nu_{\textsf{CST}}+\nu_{\textsf{NMT}}=1$. The parameters $\nu_{\textsf{CST}}$ and $\nu_{\textsf{NMT}}$ depend, for any realization of users' locations, on the number of users in each region and hence are denoted as \emph{user-specific} parameters.
\end{enumerate}
\begin{remark}\label{rem:rem1} We highlight that the proposed adaptive multicell 3D beamforming contains different elements of coordination among the BSs. For example, the BSs need to \emph{coordinatively} serve the same vertical region at each time slot, i.e., either the cell-interior region or the cell-edge region. In addition, in CST, BSs exploit the vertical plane of the wireless channel to perform ICI suppression as well as throughput optimization via \emph{coordinatively} applying sufficiently large tilts. Each BS, however, uses the horizontal plane independently for multiuser MIMO transmission within the vertical region in its own cell. In NMT, BSs use the vertical plane for throughput optimization via \emph{coordinatively} applying suitable tilts, while they exploit the horizontal plane for ICI mitigation through \emph{cooperative beamforming}.
\end{remark}

In the proposed technique, the parameters $\{D_{b,\textsf{int}},\beta_{b,\textsf{CST}},\beta_{b,\textsf{NMT}}\}_{b=1}^{B}$, denoted as \emph{region-specific parameters}, and the user-specific parameters $\nu_{\textsf{CBT}}$ and $\nu_{\textsf{NMT}}$ are unknown and need to be determined. Next, we present the methods to determine these parameters.

\begin{figure}
\centering
\subfigure[CST mode]{
\def\svgwidth{0.46\columnwidth}
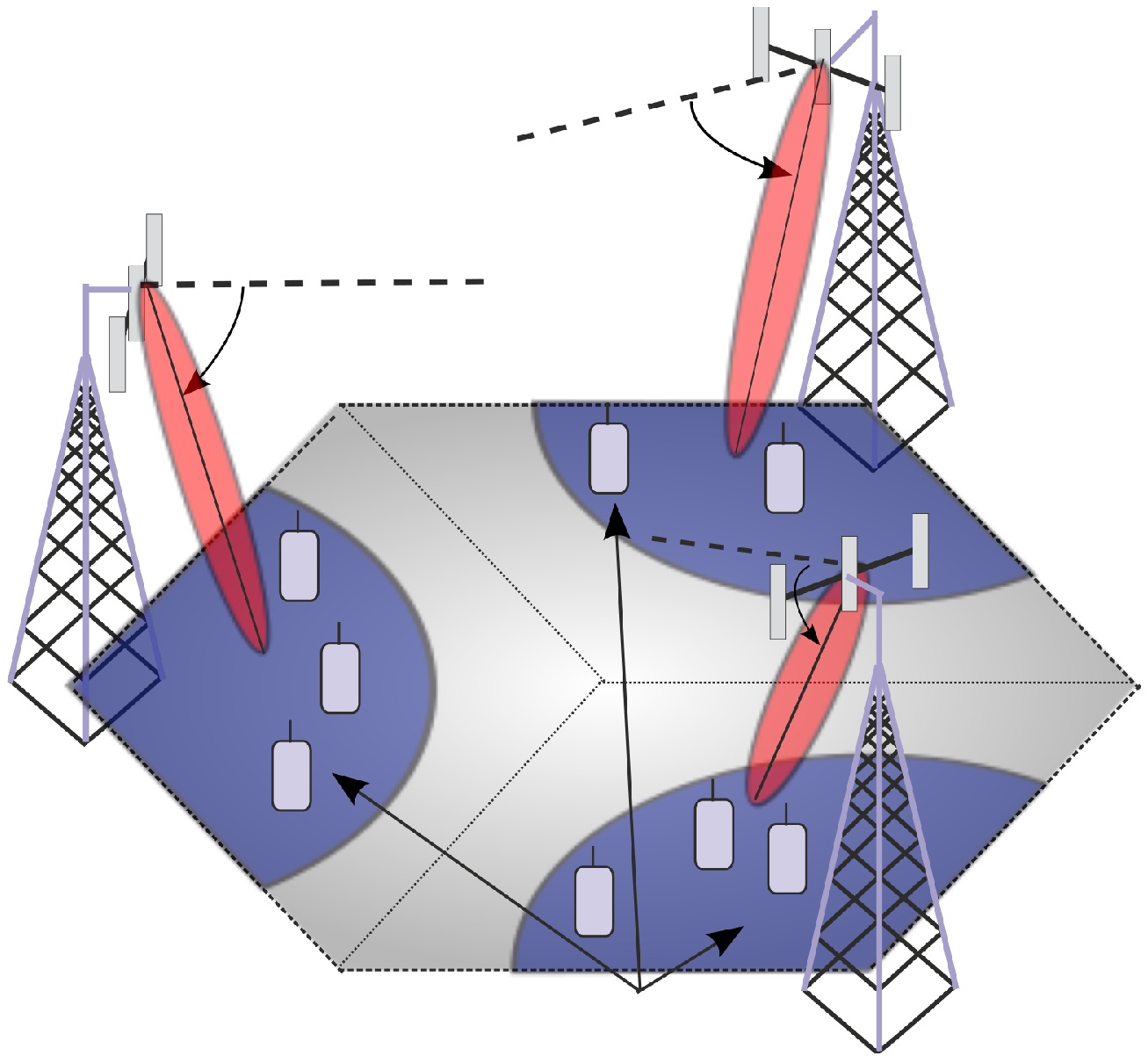
\label{fig:hybmodcbt}
}
\subfigure[NMT mode]{
\def\svgwidth{0.46\columnwidth}
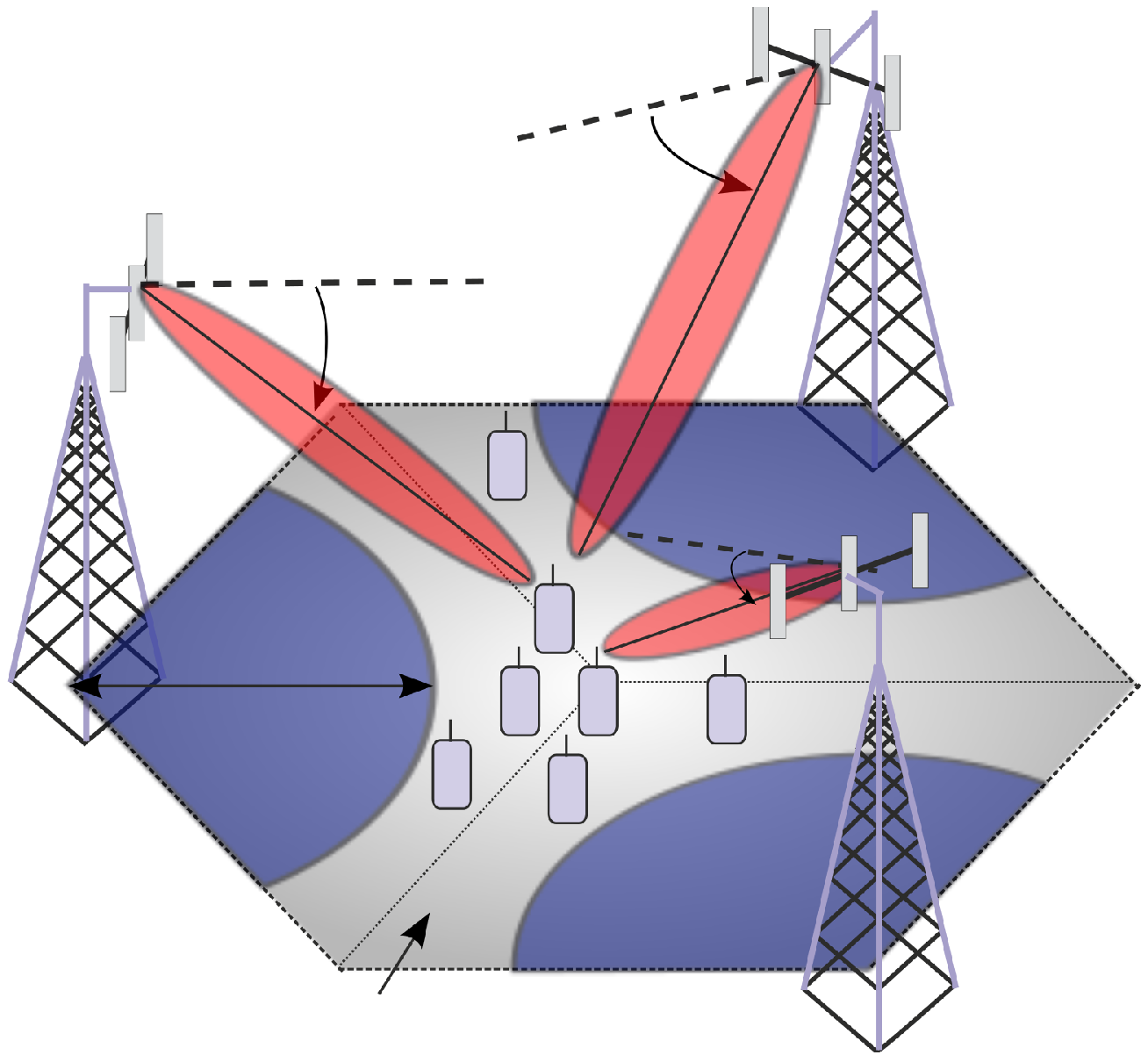
\label{fig:hybmodnmt}
}
\caption{Schematic illustration of the transmission modes and the corresponding beam tilting strategies in the proposed adaptive multicell 3D beamforming.}
\label{fig:hybmod}
\end{figure}

%\vspace{-8pt}
\subsection{Determining Region-Specific Parameters}\label{subsec:detpar}

To determine $\{D_{b,\textsf{int}},\beta_{b,\textsf{CST}},\beta_{b,\textsf{NMT}}\}_{b=1}^{B}$, we focus on average throughput maximization. Note that in the proposed adaptive multicell 3D beamforming users in the cell-interior region are served by CST and users in the cell-edge region are served by NMT. Therefore, we expect the edge and peak throughput performance to be at a satisfactory level for appropriate choices of $\{D_{b,\textsf{int}},\beta_{b,\textsf{CST}},\beta_{b,\textsf{NMT}}\}_{b=1}^{B}$ that would maximize the average throughput. 

Without loss of generality, we focus on the symmetric network in Fig.~\ref{fig:lincell}, where we expect the region-specific parameters at all cells to be the same. We further drop the cell index and denote these parameters hereafter as $D_{\textsf{int}}$, $\beta_{\textsf{CST}}$, and $\beta_{\textsf{NMT}}$. The average throughput for any given $D_{\textsf{int}}$, $\beta_{\textsf{CST}}$, and $\beta_{\textsf{NMT}}$, denoted as $\bar{R}(D_{\textsf{int}},\beta_{\textsf{CST}},\beta_{\textsf{NMT}})$, is determined using the users' throughput both in the cell-interior region and in the cell-edge region. Although the user throughput in both regions can be obtained using the analytical expressions for~\eqref{eq:achratecbt} and~\eqref{eq:achratenmt}, it is very difficult to draw any insight about how the average throughput changes with these parameters. In the following we provide a heuristic discussion about this issue.
 
On one hand, if $D_{\textsf{int}}$ becomes too small, most of the users in the interior part of the cell are served using NMT. Since $\beta_{\textsf{NMT}}$ is set so that the peak of the beam is pointing more towards the cell-edge, many of these users are close to the side-lobe of the antenna beam. Such users could potentially achieve a higher throughput if they were served by CST with $\beta_{\textsf{CST}} > \beta_{\textsf{NMT}}$. In that case, they would be both closer to the peak of the beam of their home BS and very well protected against ICI (see Fig.~\ref{fig:hybmodcbt}). On the other hand, if $D_{\textsf{int}}$ becomes too large, most of the users in the vicinity of the cell edge are served by CST using $\beta_{\textsf{CST}}$. Such users will experience a low throughput as they are both close to the side-lobe of the antenna pattern of the their home BS and subject to a large ICI. As a result, the optimum $D_{\textsf{int}}$ that maximizes the average throughput is expected to be somewhere in the middle of the cell. 

To determine the optimum $D_{\textsf{int}}$, $\beta_{\textsf{CST}}$ and $\beta_{\textsf{NMT}}$, we simulate $\bar{R}(D_{\textsf{int}},\beta_{\textsf{CST}},\beta_{\textsf{NMT}})$ for $D_{\textsf{int}}\in[0.15D,0.95D]$. For any given $D_{\textsf{int}}$, we exhaustively search over $\left[\arctan(\frac{h_{\mathsf{bs}}-h_{\mathsf{u}}}{D_{\textsf{int}}}),90^{\circ}\right]$ to find the optimum value of $\beta_{\textsf{CST}}$ and over $\left[0^{\circ},\arctan(\frac{h_{\mathsf{bs}}-h_{\mathsf{u}}}{D_{\textsf{int}}})\right]$ to find the optimum value of $\beta_{\textsf{NMT}}$ that maximizes $\bar{R}(D_{\textsf{int}},\beta_{\textsf{CST}},\beta_{\textsf{NMT}})$. Our simulation setup is the same as in Section~\ref{sec:cellspctilt}. In Fig.~\ref{fig:coordis} the average throughput is plotted versus the normalized cell-interior region radius, i.e., $D_{\textsf{int}}/D$. As can be seen in the figure, the maximum is achieved at $D_{\textsf{int}}=0.6D$ and by using a corresponding $\beta_{\textsf{CST}}=21^{\circ}$ and $,\beta_{\textsf{NMT}}=14^{\circ}$.

\begin{figure}[t]
\begin{minipage}[t]{1.0\linewidth}
\centering
\psfrag{b1}[c][b][0.8]{$\beta_{\textsf{CST}}=21^{\circ}$, $\beta_{\textsf{NMT}}=14^{\circ}$}
\psfrag{xlabel}[c][b][0.8]{Normalized cell-interior region radius, $D_{\textsf{int}}/D$}
\psfrag{ylabel}[c][t][0.8]{Average throughput, $\bar{R}(D_{\textsf{int}},\beta_{\textsf{CST}},\beta_{\textsf{NMT}})$, [bps/Hz]}
%\psfrag{s4}[c][0.8]{\tiny STM=4}
\includegraphics[width=0.7\columnwidth]{./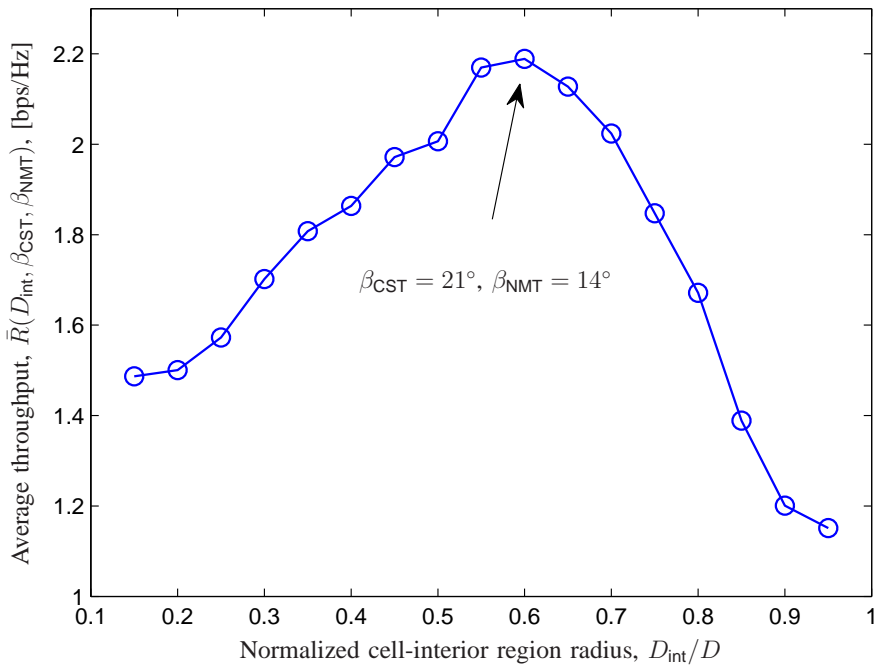}
\caption{Optimization of average throughput with respect to $D_{\textsf{int}}$, $\beta_{\textsf{CST}}$, and $\beta_{\textsf{NMT}}$.}
\label{fig:coordis}
\end{minipage}
\end{figure}

\vspace{-10pt}
\subsection{Determining User-Specific Parameters}\label{sec:fs}

%So far we have only considered the users that are served at each time-slot, denoted as the \emph{active users}. The number of such users in the network is limited by the available resources that can be used for transmission. For example, under zero-forcing beamforming, the BSs in NMT can serve at most $BN_{\textsf{t}}$ users. In practice the number of available users in the cell-edge region might be larger than $BN_{\textsf{t}}$. Hence, a scheduler is employed to select a subset of the available users at each time-slot. 
Let $\calK_{\textsf{CST}}$ and $\calK_{\textsf{NMT}}$ denote the set of all users in the cell-interior region and cell-edge region, respectively. We focus on one drop of $|\calK_{\textsf{CST}}|+|\calK_{\textsf{NMT}}|$ users over the coverage area and assume that the path gain coefficients of all users are known. Under user scheduling, the throughput of user $k$ is defined as
\begin{equation}\label{eq:actfac}
R_{k}^{\textsf{sch}} = \left \{
 \begin{array}{ll}
  \nu_{\textsf{CST}}\,R_{k,\textsf{CST}}^{\textsf{sch}}(\beta_{\textsf{CST}}) & \textrm{if } k\in\calK_{\textsf{CST}}\\
  \nu_{\textsf{NMT}}\,R_{k,\textsf{NMT}}^{\textsf{sch}}(\beta_{\textsf{NMT}}) & \textrm{if } k\in\calK_{\textsf{NMT}}.\\
  \end{array} \right .
\end{equation}
%where $\nu_{\textsf{CBT}}$ ($0\leq\nu_{\textsf{CBT}}\leq1$) denotes the cell-interior region activity factor, i.e., the fraction of the total time-slots in which the cell-interior region is active. Similarly, $\nu_{\textsf{NMT}}$ ($0\leq\nu_{\textsf{NMT}}\leq1$) is the cell-edge region activity factor such that $\nu_{\textsf{CBT}}+\nu_{\textsf{NMT}}=1$. 
Here, $R_{k,\textsf{CST}}^{\textsf{sch}}(\beta_{\textsf{CST}})$ and $R_{k,\textsf{NMT}}^{\textsf{sch}}(\beta_{\textsf{NMT}})$ indicate the \emph{user per region} throughput for user $k$ in the cell-interior region and cell-edge region, respectively. $R_{k,\textsf{CST}}^{\textsf{sch}}(\beta_{\textsf{CST}})$ ($R_{k,\textsf{NMT}}^{\textsf{sch}}(\beta_{\textsf{NMT}})$) is obtained by averaging the instantaneous rate over all the time-slots in which the cell-interior region (cell-edge region) is active. Note that user $k\in\calK_{\textsf{CST}}$ ($k\in\calK_{\textsf{NMT}}$) might not necessarily be served at each time-slot in which the cell-interior region (cell-center region) is active, in which case its instantaneous rate is zero. Therefore, $R_{k,\textsf{CST}}^{\textsf{sch}}(\beta_{\textsf{CST}})$ and $R_{k,\textsf{NMT}}^{\textsf{sch}}(\beta_{\textsf{NMT}})$ are in general different from the conditional ergodic rates defined in Section~\ref{sec:ergrate}. To determine $\nu_{\textsf{CST}}$ and $\nu_{\textsf{NMT}}$, the scheduler has to solve the following convex optimization problem:
%\vspace{-20pt}
\begin{align}\label{eq:cvxopt}
\text{maximize}&~~g(\bR^{\textsf{sch}}) \nonumber \\
\text{subject to}&~~R_{k}^{\textsf{sch}} \leq \left \{
 \begin{array}{ll}
  \nu_{\textsf{CST}}\,R_{k,\textsf{CST}}^{\textsf{sch}}(\beta_{\textsf{CST}}) & \textrm{if } k\in\calK_{\textsf{CST}}\\
  \nu_{\textsf{NMT}}\,R_{k,\textsf{NMT}}^{\textsf{sch}}(\beta_{\textsf{NMT}}) & \textrm{if } k\in\calK_{\textsf{NMT}}\\
  \end{array} \right ., \nonumber \\
&~~\nu_{\textsf{CST}}+\nu_{\textsf{NMT}}= 1, \nu_{\textsf{CST}}, \nu_{\textsf{NMT}} \geq 0.
\end{align}
In~\eqref{eq:cvxopt}, $g(\cdot)$ is a concave and componentwise non-decreasing utility function with a suitable
notion of fairness~\cite{ieeewc:Mo00} and $\bR^{\textsf{sch}}$ denotes the vector of throughputs of all users in the coverage area.
Here, we focus on the popular choice of proportional fair scheduling~\cite{ieeewc:Mo00} whose utility function is given by
\vspace{-5pt}
\begin{equation}\label{eq:utilpfs}
g(\bR^{\textsf{sch}}) = \sum_{k\in\calK_{\textsf{CST}}}\log(R_{k}^{\textsf{sch}})+\sum_{k\in\calK_{\textsf{NMT}}}\log(R_{k}^{\textsf{sch}}).
\end{equation}
Solving~\eqref{eq:cvxopt} using the utility function in~\eqref{eq:utilpfs}, we obtain the activity factors of the cell-interior region and the cell-edge region as
\vspace{-10pt}
\begin{equation}\label{eq:actfacregion}
\nu_{\textsf{CST}}=\frac{|\calK_{\textsf{CST}}|}{|\calK_{\textsf{CST}}|+|\calK_{\textsf{NMT}}|},~~\nu_{\textsf{NMT}}=\frac{|\calK_{\textsf{NMT}}|}{|\calK_{\textsf{CST}}|+|\calK_{\textsf{NMT}}|}.
\end{equation}
Notice that for proportional fair scheduling the values of $\nu_{\textsf{CST}}$ and $\nu_{\textsf{NMT}}$ are independent of $\{R_{k,\textsf{CST}}^{\textsf{sch}}(\beta_{\textsf{CST}})\}_{k\in\calK_{\textsf{CST}}}$ and $\{R_{k,\textsf{NMT}}^{\textsf{sch}}(\beta_{\textsf{NMT}})\}_{k\in\calK_{\textsf{NMT}}}$, which are usually difficult to compute analytically. Calculating the activity factors of vertical regions for other utility functions is beyond the scope of this paper and is left to our future work.
\begin{remark}\label{rem:rem1} We emphasize that in adaptive multicell 3D beamforming, the region-specific parameters are obtained via offline analysis and remain unchanged once they are determined. User-specific parameters, however, depend on the number (and not the location) of users in each vertical region for any particular realization of users' locations and need to be updated when the number of users per region changes. Therefore, the operation of the proposed scheme does not require any knowledge about the users' locations in the network.
\end{remark}

\subsection{Numerical Results}\label{sec:numres}

In this section, the performance of the proposed adaptive multicell 3D beamforming is evaluated via Monte Carlo simulation. Our simulation parameters follows those in Section~\ref{sec:cellspctilt}. We use a drop-based simulation, where at each drop $8$ users are randomly placed in each romb-shaped cell. The users are associated with the cell-interior region or the cell-edge region based on their locations in the cell. The time-slots are shared among the vertical regions according to~\eqref{eq:actfacregion}. The users in each vertical region are served assuming standard proportional fair user selection, multiuser MIMO zero-forcing in Section~\ref{subsec:transchem}, and spatial waterfilling power allocation~\cite{ieeewc:Caire2010}. We simulate a sufficient number of small-scale fading realizations such that all users achieve their limiting throughputs. We then stack the users' throughputs over all drops to obtain the throughput distribution over the coverage area. 

We compare the performance of five different systems as follows. 1) CST with edge throughput maximizing tilt ($\beta=16^{\circ}$), denoted as Uncoord-CST-E; 2) CST with average throughput maximizing tilt ($\beta=18^{\circ}$), denoted as Uncoord-CST-A; 3) NMT with edge throughput maximizing tilt ($\beta=10^{\circ}$), denoted as NMT-E; 4) NMT with average throughput maximizing tilt ($\beta=16^{\circ}$), denoted as NMT-A; and 5) Adaptive multicell 3D beamforming with $\beta_{\textsf{CST}}=21^{\circ}$ and $\beta_{\textsf{NMT}}=14^{\circ}$, denoted as AM-3D-BF. By design the chosen tilts\footnote{Note that peak throughput maximizing tilt is not considered for comparison as it is not relevant in practice.} are obtained using the throughput analysis in Sections~\ref{sec:cellspctilt} and~\ref{subsec:detpar} that assume equal power allocation among users and, are independent of any particular user scheduling algorithm. 

Figure~\ref{fig:cdfcomp} compares the throughput CDF of different transmission strategies. As can be seen, AM-3D-BF significantly improves the throughput over the coverage area compared to both Uncoord-CST-E and Uncoord-CST-A. It also provides moderate throughput gain over NMT-E and NMT-A in major parts of the coverage area except the cell boundary region, where it actually underperforms NMT-E. It is also observed that for CST (NMT), it is not possible to maximize the edge, average, and peak throughput simultaneously as Uncoord-CST-A (NMT-A) achieves a higher average and peak throughput than Uncoord-CST-E (NMT-E), but experiences a lower edge throughput. The proposed AM-3D-BF, however, achieves a tradeoff in simultaneously maximizing all the three performance metrics.

\begin{figure}[t]
\begin{minipage}[t]{1.0\linewidth}
\centering
\psfrag{xlabel}[c][b][0.8]{Throughput [bps/Hz]}
\psfrag{ylabel}[c][t][0.8]{Throughput CDF [\%]}
\centering
\includegraphics[width=0.7\columnwidth]{./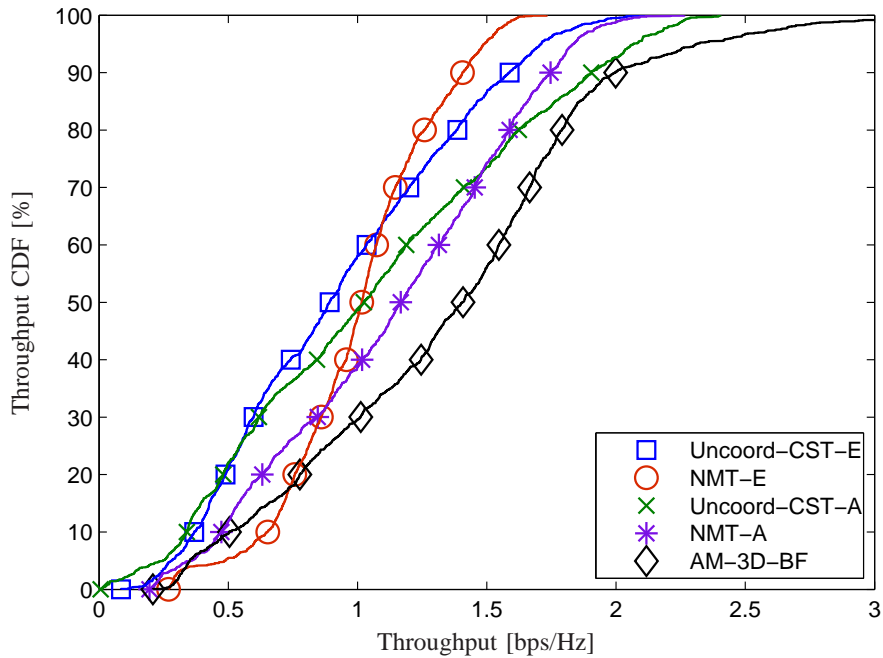}
\caption{Comparison of throughput CDF over the coverage area for different transmission strategies.}
\label{fig:cdfcomp}
\end{minipage}
\end{figure}

Figure~\ref{fig:perfgaincomp} shows the edge, average, and peak throughput gain of AM-3D-BF over the other four comparative systems. It is seen that compared to other systems, AM-3D-BF experiences at most $28\%$ loss in edge throughput (compared to NMT-E), while it provides at least $20\%$ average throughput gain (compared to NMT-A) and $12\%$ peak throughput gain (compared to Uncoord-CST-A). The loss in edge throughput is because of the larger tilt used to serve the cell-edge region in AM-3D-BF compared to NMT-E ($14^{\circ}$ versus $10^{\circ}$). The gain in peak throughput results from using a relatively large tilt ($21^{\circ}$) when serving the users close to the BSs. The gain in average throughput, however, seems to come from the joint transmission mode and tilt adaptation that improves the throughputs of users in the middle of the cell as these users are located partly in the cell-interior region and partly in the cell-edge region.

\begin{figure}[t]
\begin{minipage}[t]{1.0\linewidth}
\centering
\psfrag{xlabel}[c][b][0.8]{Percentile of throughput CDF}
\psfrag{ylabel}[c][t][0.8]{Throughput gain [\%]}
%\psfrag{s4}[c][0.8]{\tiny STM=4}
\includegraphics[width=0.7\columnwidth]{./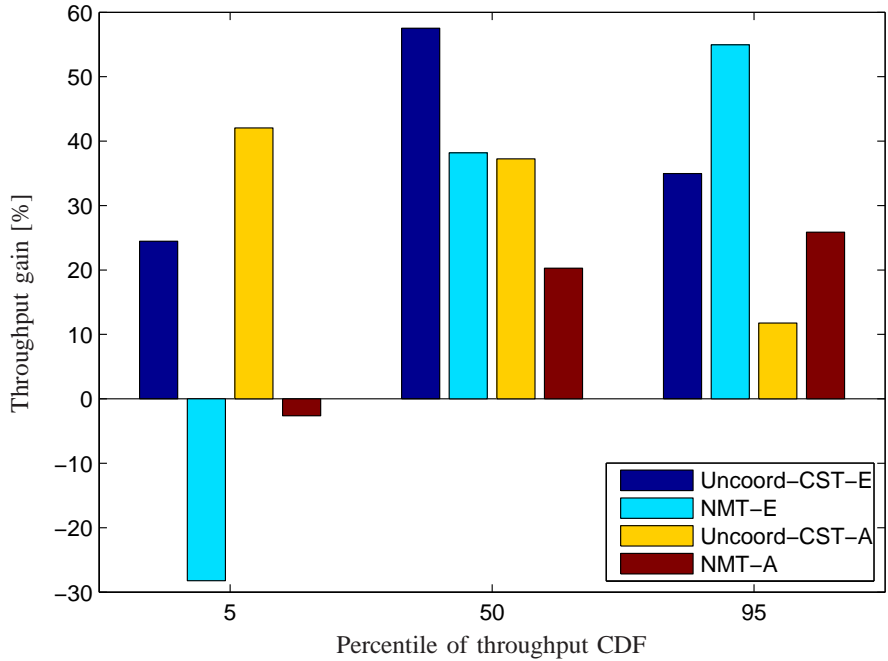}
\caption{Performance gain comparison of the proposed adaptive multicell 3D beamforming.}
\label{fig:perfgaincomp}
\end{minipage}
\end{figure}

It is also worth mentioning that assuming proportional fair scheduling and waterfilling power allocation does not seem to change the relative behavior of CST and NMT compared to that in Section~\ref{sec:cellspctilt} with equal power allocation and no user scheduling. For example, in Fig.~\ref{fig:cdfcomp}, we observe that NMT-E achieves a lower peak throughput than Uncoord-CST-E. Similarly, in Fig.~\ref{fig:peakthr}, the peak throughput of NMT at $\beta=10^{\circ}$ is smaller than that of CST at $\beta=16^{\circ}$. Therefore, the employed throughput analysis seems like a reasonable approach to determine the optimum tilts.  

Finally, we highlight that in the proposed adaptive multicell 3D beamforming a fraction $A_{\textsf{CST}}/A_{\textsf{cov}}$ of the coverage area is served using CST, where $A_{\textsf{CST}}$ is the area of the cell-interior region and $A_{\textsf{cov}}$ denotes the area of the whole network. In the considered system setup, $A_{\textsf{CST}}=\pi(0.6D)^{2}$ and $A_{\textsf{cov}}=3\sqrt{3}D^{2}/2$, resulting in $\approx 44\%$ of the coverage area to be served using CST. Equivalently, this means that the proposed technique requires about $44\%$ less signaling and data sharing overhead compared to NMT, while achieving a comparable performance.

\section{Extension to Large Cellular Networks}\label{sec:multiplecoordclust}
So far, we have designed and evaluated the adaptive multicell 3D beamforming for an isolated cluster of $B=3$ mutually interfering BSs shown in Fig.~\ref{fig:lincell}. In this section, we suggest possible approaches to employ the adaptive multicell 3D beamforming in networks with a large number of cells. A detailed investigation of the proposed approaches is beyond the scope of this paper, and should be studied in a separate work.

In cellular networks with directive antenna patterns at the BSs, a major part of the ICI is usually generated by a small number of neighboring BSs. For example, in a network with hexagonal cellular layout and $120^{\circ}$ cell sectoring, the cluster configuration in Fig.~\ref{fig:lincell} already includes the strongest interfering BSs. Hence, multicell cooperation within this cluster can mitigate a significant part of the overall ICI~\cite{ieeewc:Muller2012}. In fact, adding more cells to this cluster might bring diminishing gain or even loss when taking into account the excessive overhead and complexity associated with CSI acquisition at the BSs (training, estimation, and feedback)~\cite{ieeewc:6095627}. Thus, one possible approach to employ the proposed adaptive multicell 3D beamforming in large cellular networks is via static clustering in which the network is divided into fixed and disjoint clusters such that each cluster contains the dominant interfering BSs~\cite{DBLP:journals/ejwcn/LiBS12,ieeewc:6095627}. Figure~\ref{fig:multiclust} shows an exemplary $21$-cell network that has been divided into $7$ disjoint clusters, denoted as $\text{C1},\text{C2},\ldots,\text{C7}$. In this network, the proposed adaptive multicell 3D beamforming can effectively control the intra-cluster interference in a large part of the cluster area with reasonable overhead and complexity. The performance of users at the cluster edge might, however, be still limited by out-of-cluster interference. In the following, without loss of generality, we focus on $\text{C1}$ in the exemplary network in Fig.~\ref{fig:multiclust}. We further classify the out-of-cluster interfering BSs of $\text{C1}$ into two groups and propose possible approaches to suppress the interference from each group.

\begin{figure}[t]
\begin{minipage}[t]{1.0\linewidth}
\centering
%\psfrag{cellcellcellcell}[c][][0.8]{cell}
%\psfrag{hai2}[c][][0.7]{$h_{\textsf{bs}}$}
%\psfrag{hai1}[c][][0.7]{$h_{\textsf{u}}$}
%\includegraphics[width=0.65\columnwidth]{./images/sysmod.eps}
\def\svgwidth{0.5\columnwidth}
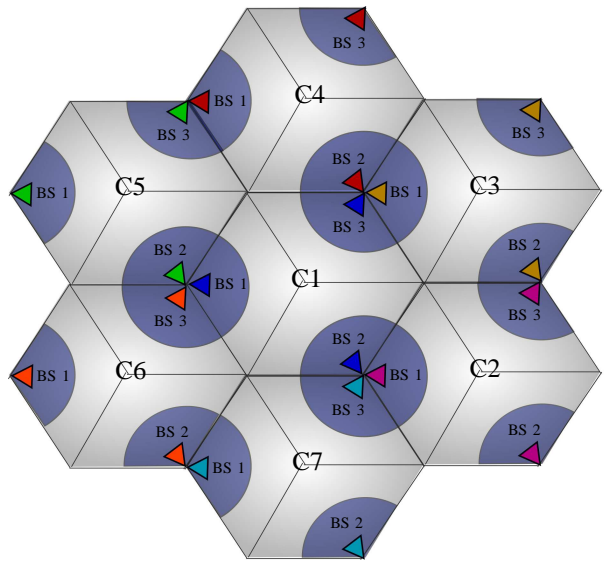
\caption{Illustration of employing the proposed adaptive multiple 3D beamforming in a $21$-cell network.}
\label{fig:multiclust}
\end{minipage}
\end{figure}

The first group of out-of-cluster interfering BSs consists of those BSs in the neighboring clusters that are installed next to the BSs in $\text{C1}$ (i.e., BS $1$ in $\text{C2}$ and in $\text{C3}$, BS $2$ in $\text{C4}$ and in $\text{C5}$, and BS $3$ in $\text{C6}$ and in $\text{C7}$). The interference from this group is generated mainly through their back lobes, i.e., the side lobes in the back side of the antenna. Moreover, due to the significantly smaller antenna gain of the back lobes compared to that of the main lobe, the interference from this group will be most detrimental to their adjacent cell in $\text{C1}$ and not to all cells. For example, the interference from BS $1$ in $\text{C3}$ and BS $2$ in $\text{C4}$ will mainly impact the users in cell $3$ of $\text{C1}$. One way to suppress such interference would be to multiplex the time-slots allocated to serve the cell-interior region and cell-edge region across neighboring clusters such that for each BS in $\text{C1}$ and its two adjacent BSs in the neighboring clusters, at least one will serve a different vertical region (i.e., cell-interior region or cell-edge region) than the others. Moreover, because CSI sharing among adjacent BSs does not incur extra backhaul usage, another way to suppress the out-of-cluster interference from this group would be via some form of coordinated beamforming among each BS in $\text{C1}$ and its two adjacent BSs in the neighboring clusters.

The second group consists of all other BSs in $\text{C2}$ to $\text{C7}$. These BSs are located at a distance larger than $D$ from the users in $\text{C1}$ and generate out-of-cluster interference mostly through their main lobes. The interference from this group might be significant only when they serve the cell-edge regions in their corresponding cluster using $\beta_{\textsf{NMT}}$. Thanks to the directivity of the antenna patterns and the small vertical HPBW, in networks with a dense deployment of BSs, such interference can be significantly suppressed by e.g., slightly increasing $\beta_{\textsf{NMT}}$. Note that when these group of BSs serve the cell-interior region of their corresponding clusters, sufficient out-of-cluster interference mitigation is achieved via using $\beta_{\textsf{CST}}>\beta_{\textsf{NMT}}$.

We finally highlight that there are also other clustering techniques, such as semi-static clustering~\cite{ieeetwc:5959253,ieeetwc:6884162}, dynamic clustering\cite{ieeetwc:4533793,ieeetwc:6827165}, and hierarchical clustering~\cite{ieeewc:1657802} among others, that provide more flexibility to deal with out-of-cluster interference. The tradeoffs between different techniques are complicated by the need for more CSI training and feedback overhead, and additional backhaul signaling. Detailed investigation of these approaches is left to future work.

\vspace{-10pt}
\section{Concluding Remarks}\label{sec:concl}

In this paper, we investigated downlink transmission in a cellular network with small number of cells that employs two well-known transmission modes, namely, conventional single-cell transmission and fully cooperative multicell transmission denoted as network MIMO. To facilitate a computationally efficient analysis, we proposed a novel method for approximating the non-i.i.d. network MIMO channel with an equivalent i.i.d. MIMO channel. We used this method to derive an accurate analytical expression for the user ergodic rate under network MIMO transmission with imperfect CSI. We then considered directional antennas with vertically adjustable beams at the BSs  and studied cell-specific tilting for the two transmission modes separately. Our results showed that upon applying sufficiently large tilts at the BSs, the two transmission modes have similar performance in regions close to the BSs. In fact, with sufficient interference isolation provided by applying large tilts, receiving a given desired signal power from the closest BS with the strongest channel seems to perform as good as receiving it from multiple geographically distributed BSs with disparate channel strengths. Using this conclusion, we proposed an adaptive multicell 3D beamforming technique that adaptively exploits the horizontal and vertical planes of the wireless channel for interference management as well as throughput optimization. The proposed technique divides the coverage area into two vertical regions and adapt the multicell cooperation strategy, including the transmission mode and beamforming strategy at the BSs, when serving each region. Numerical results showed the superiority of the proposed technique over the uncoordinated conventional single-cell transmission. The proposed technique also seems to provide a superior performance-complexity tradeoff compared to network MIMO transmission. Finally, we presented possible approaches to employ the proposed adaptive multicell 3D beamforming in networks with a large number of cells.

%%%%%%%%%%%%%%%%%%%%%%%%%%%%%%%%%%%%%%%%%%%%%%%%%%%%%%%%%%
%                      Appendices
%%%%%%%%%%%%%%%%%%%%%%%%%%%%%%%%%%%%%%%%%%%%%%%%%%%%%%%%%%
\vspace{-10pt}
\appendices
\section{Mathematical Lemmas}\label{sec:mathprlm}

In this appendix, we provide some well-known lemmas that prove useful in the analyses in Sections~\ref{sec:EIID} and~\ref{sec:ergrate}.

\begin{lemma}\label{fact2}
If $Y$ is a Gamma RV with shape parameter $\mu$ and scale parameter $\theta$, i.e., $Y \sim \Gamma(\mu,\theta)$, and $b$ is a positive constant, then $bY \sim \Gamma(\mu,b\theta)$.
\end{lemma}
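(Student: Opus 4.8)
The plan is to prove this directly from the probability density function by an elementary change of variables, with the moment generating function serving as a quick cross-check. First I would recall the density of a Gamma random variable in the scale parameterization: for $Y \sim \Gamma(\mu,\theta)$ with $\mu,\theta>0$, the density is
\begin{equation}\label{eq:gammapdf}
f_Y(y) = \frac{1}{\Gamma(\mu)\,\theta^{\mu}}\, y^{\mu-1}\, e^{-y/\theta}, \qquad y>0,
\end{equation}
and zero otherwise. Since $b>0$, the map $y \mapsto z = by$ is a strictly increasing bijection of $(0,\infty)$ onto itself, with inverse $y = z/b$ and derivative $\mathrm{d}y/\mathrm{d}z = 1/b$.

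Next I would apply the standard one-dimensional transformation formula $f_Z(z) = f_Y(z/b)\,\lvert \mathrm{d}y/\mathrm{d}z\rvert$ to the density in~\eqref{eq:gammapdf}. Substituting $y=z/b$ gives
\begin{equation}\label{eq:gammacov}
f_Z(z) = \frac{1}{\Gamma(\mu)\,\theta^{\mu}}\left(\frac{z}{b}\right)^{\mu-1} e^{-z/(b\theta)} \cdot \frac{1}{b} = \frac{1}{\Gamma(\mu)\,(b\theta)^{\mu}}\, z^{\mu-1}\, e^{-z/(b\theta)}, \qquad z>0,
\end{equation}
where the factors of $b$ have been collected into $b^{\mu-1}\cdot b = b^{\mu}$ in the denominator. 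The right-hand side of~\eqref{eq:gammacov} is precisely the density of a Gamma random variable with shape parameter $\mu$ and scale parameter $b\theta$, which establishes $bY \sim \Gamma(\mu, b\theta)$.

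As an independent confirmation, I would note that the moment generating function of $Y$ is $M_Y(t) = (1-\theta t)^{-\mu}$ for $t<1/\theta$, so that $M_{bY}(t) = M_Y(bt) = (1 - (b\theta)t)^{-\mu}$, which is exactly the moment generating function of $\Gamma(\mu, b\theta)$; uniqueness of the transform then yields the claim. There is no real obstacle here, as the result is a routine scaling property; the only point requiring mild care is bookkeeping the power of $b$ in the Jacobian together with the $(z/b)^{\mu-1}$ factor so that the shape parameter $\mu$ is preserved while only the scale is rescaled to $b\theta$. One should also confirm that the scale (rather than rate) parameterization is the one in force, so that multiplication by $b$ scales the scale parameter upward by $b$ as stated.
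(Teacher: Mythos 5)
Your proof is correct and complete: the change-of-variables computation is carried out accurately, the powers of $b$ are collected correctly so that only the scale parameter changes, and the MGF cross-check is a valid independent confirmation. The paper itself states this lemma without proof, listing it among ``well-known lemmas,'' so there is no authorial argument to compare against; your density-transformation argument is the standard one and fully establishes the claim.
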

\begin{lemma}\label{fact3}
If $Y_{i} \sim \Gamma(\mu_{i},\theta)$ for $i=1,\ldots,N$, then $\sum_{i=1}^{N}Y_{i} \sim \Gamma\left(\sum_{i=1}^{N}\mu_{i},\theta\right)$.
\end{lemma}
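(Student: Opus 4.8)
The plan is to establish the claim through the uniqueness of the moment generating function (MGF), exploiting the (implicitly assumed) mutual independence of the $Y_{i}$, which is exactly how these terms arise in the application to $P_{k,\textsf{NMT}}^{\textsf{MRI}}(\bbeta)$. First I would recall that a Gamma random variable $Y \sim \Gamma(\mu,\theta)$ with scale parameter $\theta$ admits the MGF
\begin{equation}
M_{Y}(t) = \bbE\left[e^{tY}\right] = (1-\theta t)^{-\mu}, \qquad t < 1/\theta .
\end{equation}
Because the $Y_{i}$ are independent, the MGF of their sum factorizes into the product of the individual MGFs, and this is the single fact I would lean on.

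Next I would compute, on the common domain $t < 1/\theta$,
\begin{equation}
M_{\sum_{i} Y_{i}}(t) = \prod_{i=1}^{N} M_{Y_{i}}(t) = \prod_{i=1}^{N}(1-\theta t)^{-\mu_{i}} = (1-\theta t)^{-\sum_{i=1}^{N}\mu_{i}} .
\end{equation}
The right-hand side is precisely the MGF of a $\Gamma\left(\sum_{i=1}^{N}\mu_{i},\theta\right)$ random variable. Since this MGF is finite on an open neighborhood of the origin, it determines the law uniquely, and I would conclude that $\sum_{i=1}^{N}Y_{i} \sim \Gamma\left(\sum_{i=1}^{N}\mu_{i},\theta\right)$. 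As an alternative, one could argue by induction on $N$, reducing to the case $N=2$ and convolving the two Gamma densities directly; the resulting integral is a Beta integral that reproduces the correct normalizing constant $1/\left(\Gamma(\mu_{1}+\mu_{2})\,\theta^{\mu_{1}+\mu_{2}}\right)$.

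The decisive point, and the only feature worth flagging rather than a genuine obstacle, is that the summands share the \emph{same} scale parameter $\theta$: this is exactly what lets the exponents merge in the product of MGFs and collapse everything into a single Gamma form. Had the scales differed, the exponents could not be combined and the sum would not be Gamma-distributed, which is why the moment-matching lemma is invoked instead in the unequal-scale setting treated earlier. I would therefore emphasize the common scale and the independence assumption explicitly, after which the argument is routine.
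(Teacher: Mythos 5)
Your proof is correct. The paper states this lemma without proof, presenting it as a well-known fact, so there is no in-paper argument to compare against; your MGF computation is the standard route, and you rightly flag the two hypotheses the statement leaves implicit -- mutual independence of the $Y_{i}$ and the common scale parameter $\theta$ -- both of which are indeed satisfied in the application to $P_{k,\textsf{NMT}}^{\textsf{MRI}}(\bbeta)$, where the summands $\|\be_{k,\textsf{a}}^{\hermit}\bw_{j,\textsf{a}}\|^{2}$ share the scale $\sigma_{k,\textsf{a}}^{2}(\bbeta)$.
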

\begin{lemma}\label{fact1}
If $Z$ is a chi-square RV with $2r$ DoF, denoted as $Z \sim {\chi}_{2r}^{2}$, and $a$ is a positive constant, then $aZ \sim \Gamma(r,2a)$.
\end{lemma}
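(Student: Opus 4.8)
The plan is to recognize the chi-square law as a special case of the Gamma family and then invoke the scaling property already recorded in Lemma~\ref{fact2}. First I would recall that a chi-square RV $Z$ with $2r$ degrees of freedom has density
\begin{equation*}
f_{Z}(z) = \frac{1}{2^{r}\Gamma(r)}\,z^{r-1}e^{-z/2}, \qquad z>0,
\end{equation*}
which is precisely the density of a Gamma RV with shape parameter $r$ and scale parameter $2$; that is, $Z \sim \Gamma(r,2)$. Matching conventions here is the only point requiring any care: the ``$2r$ DoF'' in the statement is chosen exactly so that the shape parameter comes out as the integer $r$ and the scale equals $2$ in the shape--scale parametrization used throughout the paper.

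Having identified $Z \sim \Gamma(r,2)$, I would apply Lemma~\ref{fact2} with the Gamma RV taken to be $Z$, shape $\mu = r$, scale $\theta = 2$, and positive constant $b = a$. The lemma then yields $aZ \sim \Gamma(r, 2a)$ immediately, which is the claim.

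Alternatively, for a self-contained argument that does not chain through Lemma~\ref{fact2}, I would carry out the change of variables $W = aZ$ directly. Writing $z = w/a$ with $\mathrm{d}z/\mathrm{d}w = 1/a$, the transformed density becomes
\begin{equation*}
f_{W}(w) = f_{Z}(w/a)\,\frac{1}{a} = \frac{1}{(2a)^{r}\Gamma(r)}\,w^{r-1}e^{-w/(2a)}, \qquad w>0,
\end{equation*}
which is exactly the $\Gamma(r,2a)$ density. Since the statement is essentially a definitional identity together with a single scaling, I do not expect any substantive obstacle; the only thing to watch is consistency of the shape--scale (as opposed to shape--rate) parametrization, so that the scale, and not the rate, is what gets multiplied by $a$.
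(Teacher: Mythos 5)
Your proof is correct: identifying $Z\sim\chi^{2}_{2r}$ with $\Gamma(r,2)$ in the shape--scale parametrization and then either invoking the scaling property of Lemma~\ref{fact2} or doing the change of variables directly both yield $aZ\sim\Gamma(r,2a)$. The paper states this lemma as a well-known fact without proof, so there is nothing to compare against; your argument is the standard one, and your caution about the shape--scale versus shape--rate convention is exactly the right point to flag.
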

\begin{lemma}[Muirhead~\cite{Muirhead1982}]\label{lem1:lem1}
The projection of an $M$-dimensional vector with i.i.d. $\calCN(0,\sigma^{2})$ elements onto a subspace of dimension $s$, for $s \leq M$, is another vector of dimension $s$ with i.i.d. $\calCN(0,\sigma^{2})$ elements.
\end{lemma}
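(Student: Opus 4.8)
The plan is to read ``projection'' as the coordinate representation in an orthonormal basis of the target subspace, and then to lean on the single structural fact that a proper (circularly symmetric) complex Gaussian with scalar covariance is invariant under unitary maps. First I would fix an orthonormal basis $\{\bv_{1},\ldots,\bv_{s}\}$ for the $s$-dimensional subspace $\calS$ and collect it as the columns of $\bV\in\bbC^{M\times s}$ with $\bV^{\hermit}\bV=\bI_{s}$. Writing $\bx\sim\calCN(\bzero,\sig^{2}\bI_{M})$ for the original vector, the projection is $\bV\bV^{\hermit}\bx$, whose coordinates in this basis are exactly $\by=\bV^{\hermit}\bx\in\bbC^{s}$; it therefore suffices to identify the law of $\by$. (Equivalently, one may complete $\bV$ to a unitary $[\,\bV\ \bV_{\perp}\,]$ and note that unitary invariance makes $\by$ the first $s$ coordinates of an i.i.d. vector, but the direct computation below is cleaner.)

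Next I would use that a linear image of a Gaussian is Gaussian, so $\by$ is complex Gaussian and is pinned down by its mean, its covariance, and its pseudo-covariance. The mean is $\bbE[\by]=\bV^{\hermit}\bbE[\bx]=\bzero$; the covariance is $\bbE[\by\by^{\hermit}]=\bV^{\hermit}(\sig^{2}\bI_{M})\bV=\sig^{2}\bV^{\hermit}\bV=\sig^{2}\bI_{s}$; and, because $\bx$ is circularly symmetric so that $\bbE[\bx\bx^{\trans}]=\bzero$, the pseudo-covariance vanishes: $\bbE[\by\by^{\trans}]=\bV^{\hermit}\bbE[\bx\bx^{\trans}]\bar{\bV}=\bzero$. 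Hence $\by$ is again a proper complex Gaussian with covariance $\sig^{2}\bI_{s}$, which is precisely the assertion that its $s$ entries are i.i.d. $\calCN(0,\sig^{2})$.

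The only genuine subtlety, and the step I would flag, is twofold. First, in the applications the subspace $\calS$ is itself random (it is the nullspace spanned by the other users' estimated channel vectors); I would handle this by conditioning on $\calS$, noting that when $\calS$ is independent of $\bx$ the computation above holds verbatim for every realization of $\bV$ and the conditional law $\calCN(\bzero,\sig^{2}\bI_{s})$ does not depend on $\bV$, so the unconditional law is the same. Second, I would be careful not to conflate uncorrelatedness with independence in the complex setting: this is exactly why the argument checks the vanishing pseudo-covariance alongside the diagonal covariance, since both are required before circular symmetry is allowed to upgrade ``uncorrelated and equal-variance'' to ``i.i.d.''
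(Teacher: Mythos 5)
Your proof is correct. Note that the paper does not actually prove this lemma at all: it is stated as a known result and attributed to Muirhead~\cite{Muirhead1982}, so there is no in-paper argument to compare against. Your route --- representing the projection in an orthonormal basis $\bV$ of the subspace, observing that $\by=\bV^{\hermit}\bx$ is Gaussian, and pinning down its law via the mean, the covariance $\sigma^{2}\bV^{\hermit}\bV=\sigma^{2}\bI_{s}$, and the vanishing pseudo-covariance --- is the standard unitary-invariance argument and is exactly what the citation is standing in for. Your two flagged subtleties are both well placed: the conditioning step for a random subspace independent of $\bx$ is precisely what is needed where the lemma is invoked in Section~\ref{subsec:ergrateNMT} (the nullspace there is built from the \emph{other} users' estimated channels, hence independent of $\hat{\bh}_{k,\textsf{a}}$ and of $\be_{k,\textsf{a}}$), and checking the pseudo-covariance is the correct way to upgrade ``uncorrelated with scalar covariance'' to ``i.i.d.'' in the circularly symmetric complex case.
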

\begin{lemma}\label{prop1}
Assume $\{Y_{i}\}$ are independent Gamma RVs with parameters $\mu_{i}$ and $\theta_{i}$. The RV $W \sim \Gamma(\mu,\theta)$ has the same first and second order moments as the RV $Y = \sum_{i}X_{i}$, where
%\vspace{-5pt}
\begin{equation}\label{eq:secmm}
\mu = \frac{(\sum_{i}\mu_{i}\theta_{i})^{2}}{\sum_{i}\mu_{i}\theta_{i}^{2}}~~\text{and}~~\theta = \frac{\sum_{i}\mu_{i}\theta_{i}^{2}}{\sum_{i}\mu_{i}\theta_{i}}.
\end{equation}
\end{lemma}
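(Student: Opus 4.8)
The plan is to match the mean and variance of $W$ to those of $Y = \sum_i Y_i$. Since a Gamma distribution is determined by its two parameters, this furnishes exactly two equations in the two unknowns $\mu$ and $\theta$. First I would recall the elementary moment formulas for a Gamma RV: if $W \sim \Gamma(\mu,\theta)$, then $\bbE[W] = \mu\theta$ and $\mathrm{Var}(W) = \mu\theta^{2}$, so that $\bbE[W^{2}] = \mu\theta^{2} + \mu^{2}\theta^{2}$. Matching the first and second order moments is equivalent to matching mean and variance (the second raw moment is recovered from the first two via $\bbE[W^{2}] = \mathrm{Var}(W) + \bbE[W]^{2}$), so I would work with mean and variance for cleanliness.

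Next I would compute the corresponding quantities for the sum. By linearity of expectation, $\bbE[Y] = \sum_{i}\bbE[Y_{i}] = \sum_{i}\mu_{i}\theta_{i}$. Because the $Y_{i}$ are \emph{independent}, their variances add, giving $\mathrm{Var}(Y) = \sum_{i}\mathrm{Var}(Y_{i}) = \sum_{i}\mu_{i}\theta_{i}^{2}$. Independence is the only nontrivial hypothesis used here; without it the cross-covariance terms would survive and the variance identity would fail. Imposing $\bbE[W] = \bbE[Y]$ and $\mathrm{Var}(W) = \mathrm{Var}(Y)$ then yields the system
\begin{align}
\mu\theta &= \sum_{i}\mu_{i}\theta_{i}, \nonumber \\
\mu\theta^{2} &= \sum_{i}\mu_{i}\theta_{i}^{2}. \nonumber
\end{align}

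Finally I would solve this $2\times 2$ system. Dividing the second equation by the first isolates the scale parameter, $\theta = (\sum_{i}\mu_{i}\theta_{i}^{2})/(\sum_{i}\mu_{i}\theta_{i})$, and substituting back into the first gives $\mu = \bbE[Y]/\theta = (\sum_{i}\mu_{i}\theta_{i})^{2}/(\sum_{i}\mu_{i}\theta_{i}^{2})$, which are precisely the expressions in~\eqref{eq:secmm}. I do not expect any real analytical obstacle: the argument is pure moment matching and the computation is elementary. The only point worth flagging is that the resulting $\mu$ need not be an integer, so $W$ is a genuine (possibly non-integer-shape) Gamma RV rather than a scaled chi-square; this is harmless for the downstream ergodic-rate computation, but it is the reason the lemma delivers an \emph{approximation} of the distribution of $Y$ (matched through second order only) rather than an exact distributional identity.
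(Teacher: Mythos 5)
Your proof is correct: the paper states this lemma without proof as a well-known fact, and your moment-matching argument (equating $\mu\theta=\sum_{i}\mu_{i}\theta_{i}$ and $\mu\theta^{2}=\sum_{i}\mu_{i}\theta_{i}^{2}$, then solving the $2\times 2$ system) is exactly the standard derivation the paper is implicitly relying on. Your closing remarks --- that the sum in the statement should read $\sum_{i}Y_{i}$, that $\mu$ need not be an integer, and that the lemma delivers only a second-order approximation rather than an exact distributional identity --- are all consistent with how the result is used in the ergodic-rate analysis.
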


\begin{lemma}\label{lemmeij}
Let $X\sim \Gamma(\mu,\theta)$, then $\bbE_{X}[\log_{2}(1+X)]$ is computed as
%\vspace{-10pt}
\begin{equation}
\bbE_{X}[\log_{2}(1+X)] = \frac{1}{\Gamma(\mu)\ln 2}G^{1,3}_{3,2}\left[ \theta\left\vert
\begin{array}{c}
1-\mu,1,1 \\
1,0 \\
\end{array}
\right. \right] \nonumber
\end{equation}
where $G^{m,n}_{p,q}\left[ x\left\vert
\begin{array}{c}
\eta_{1},\ldots,\eta_{p} \\
\nu_{1},\ldots,\nu_{q} \\
\end{array}
\right. \right]$ denotes the Meijer's G-function~\cite[Eq. (9.301)]{ieeecl:Table}.
\end{lemma}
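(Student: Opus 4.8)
The plan is to reduce the expectation to a single tabulated integral and evaluate it with the Meijer $G$-function machinery. First I would write out the expectation against the density of $X\sim\Gamma(\mu,\theta)$, which is $f_X(x)=x^{\mu-1}e^{-x/\theta}/(\Gamma(\mu)\theta^{\mu})$ for $x>0$, giving
\begin{equation}
\bbE_X[\log_2(1+X)] = \frac{1}{\Gamma(\mu)\theta^{\mu}\ln 2}\int_0^\infty x^{\mu-1}e^{-x/\theta}\ln(1+x)\,dx. \nonumber
\end{equation}
Everything then hinges on evaluating $I=\int_0^\infty x^{\mu-1}e^{-x/\theta}\ln(1+x)\,dx$ in closed form.

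The key step is to represent the logarithm as a Meijer $G$-function. I would verify the identity
\begin{equation}
\ln(1+x) = G^{1,2}_{2,2}\left[ x\left\vert \begin{array}{c} 1,1 \\ 1,0 \\ \end{array} \right. \right] \nonumber
\end{equation}
by matching Mellin transforms: on the strip $-1<\mathrm{Re}(s)<0$ an integration by parts gives $\int_0^\infty x^{s-1}\ln(1+x)\,dx = \pi/(s\sin\pi s)$, and the Mellin transform of the right-hand $G$-function, namely $\Gamma(1+s)\Gamma(-s)^2/\Gamma(1-s)$, collapses to the same expression after one application of the reflection formula. With this substitution, $I$ becomes the Laplace transform (at parameter $1/\theta$) of $x^{\mu-1}$ times a Meijer $G$-function, where the factor $e^{-x/\theta}$ plays the role of the exponential kernel $G^{1,0}_{0,1}$.

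Finally I would apply the standard Laplace-transform identity for a Meijer $G$-function \cite{ieeecl:Table}: integrating $e^{-x/\theta}x^{\mu-1}G^{m,n}_{p,q}(x|\cdots)$ over $(0,\infty)$ multiplies by $\theta^{\mu}$, adjoins the new upper parameter $1-\mu$ in the first slot, and raises the orders to $G^{m,n+1}_{p+1,q}$ evaluated at $\theta$. Applied with $(m,n,p,q)=(1,2,2,2)$ this yields
\begin{equation}
I = \theta^{\mu}\,G^{1,3}_{3,2}\left[ \theta\left\vert \begin{array}{c} 1-\mu,1,1 \\ 1,0 \\ \end{array} \right. \right], \nonumber
\end{equation}
and substituting back cancels the $\theta^{\mu}$ factor to reproduce exactly the claimed expression. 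I expect the only real obstacle to be bookkeeping: confirming that the fundamental strips of the two Mellin transforms overlap so the $G$-representation of $\ln(1+x)$ is legitimate, and keeping the parameter insertion and index bumps of the Laplace formula in the correct order. Both reduce to selecting the right tabulated identities rather than to any genuinely hard estimate.
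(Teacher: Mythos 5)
Your proposal is correct and follows essentially the same route as the paper: it rewrites $\ln(1+x)$ as $G^{1,2}_{2,2}\left[x\left\vert\begin{array}{c}1,1\\1,0\end{array}\right.\right]$ and then evaluates the resulting integral with the standard Laplace-transform identity for Meijer $G$-functions, exactly the two cited steps (a) and (b) in the paper's proof. The only addition is your Mellin-transform verification of the logarithm representation, which the paper simply cites from Prudnikov et al.; that check is sound.
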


\begin{proof}
\begin{align*}
\bbE_{X}[\log_{2}(1+X)] &\overset{(a)}{=} \frac{1}{\theta^{\mu}\Gamma(\mu)\ln 2}\int_0^{\infty} G^{1,2}_{2,2}\left[ x \left\vert
\begin{array}{c}
1,1 \\
1,0 \\
\end{array}%
\right. \right]
x^{\mu-1}e^{-x/\theta}dx \nonumber \\
& \overset{(b)}{=} \frac{1}{\Gamma(\mu)\ln 2}G^{1,3}_{3,2}\left[ \theta\left\vert
\begin{array}{c}
1-\mu,1,1 \\
1,0 \\
\end{array}%
\right. \right].
\end{align*}
Here, (a) follows by expressing the logarithmic term $\ln(1+x)$ via a Meijer's $G$-function
according to~\cite[Eq. (8.4.6.5)]{ieeecl:Prud} and (b) results by evaluating the integral expression in (a) using the integral identity for Meijer's $G$-functions from~\cite[Eq. (7.813.1)]{ieeecl:Table} and some algebraic simplifications.
%Note that the Meijer-$G$ function can be easily evaluated and efficiently programmed in most standard software packages (e.g. MAPLE, MATHEMATICA, and MATLAB R2011a).
\end{proof}

%%%%%%%%%%%%%%%%%%%%%%%%%%%%%%%%%%%%%%%%%%%%%%%%%%%%%%%%%%
%                      References
%%%%%%%%%%%%%%%%%%%%%%%%%%%%%%%%%%%%%%%%%%%%%%%%%%%%%%%%%%
\vspace{-5pt}
\small
\bibliographystyle{IEEEtran}

\begin{thebibliography}{10}
\providecommand{\url}[1]{#1}
\csname url@samestyle\endcsname
\providecommand{\newblock}{\relax}
\providecommand{\bibinfo}[2]{#2}
\providecommand{\BIBentrySTDinterwordspacing}{\spaceskip=0pt\relax}
\providecommand{\BIBentryALTinterwordstretchfactor}{4}
\providecommand{\BIBentryALTinterwordspacing}{\spaceskip=\fontdimen2\font plus
\BIBentryALTinterwordstretchfactor\fontdimen3\font minus
  \fontdimen4\font\relax}
\providecommand{\BIBforeignlanguage}[2]{{%
\expandafter\ifx\csname l@#1\endcsname\relax
\typeout{** WARNING: IEEEtran.bst: No hyphenation pattern has been}%
\typeout{** loaded for the language `#1'. Using the pattern for}%
\typeout{** the default language instead.}%
\else
\language=\csname l@#1\endcsname
\fi
#2}}
\providecommand{\BIBdecl}{\relax}
\BIBdecl

\bibitem{ieeetc:6824752}
J.~G. Andrews and \textit{et al.}, ``What will 5{G} be?'' \emph{IEEE J. Sel.
  Areas Commun.,}, vol.~32, no.~6, pp. 1065--1082, Jun. 2014.

\bibitem{ieeetwc:5594708}
D.~Gesbert, S.~Hanly, H.~Huang, S.~Shamai~Shitz, O.~Simeone, and W.~Yu,
  ``Multi-cell {MIMO} cooperative networks: A new look at interference,''
  \emph{IEEE J. Sel. Areas Commun.}, vol.~28, no.~9, pp. 1380--1408, Dec. 2010.

\bibitem{ieeetc:Lozano2012}
A.~Lozano, R.~W. Heath~Jr., and J.~G. Andrews, ``Fundamental limits of
  cooperation,'' \emph{IEEE Trans. Info. Theory}, vol.~59, no.~9, pp.
  5213--5226, Mar. 2013.

\bibitem{ieeetwc:5757724}
N.~Seifi, M.~Viberg, R.~W. Heath, J.~Zhang, and M.~Coldrey, ``Coordinated
  single-cell vs multi-cell transmission with limited-capacity backhaul,'' in
  \emph{Proc. IEEE Asilomar Conf. Sig., Sys. and Computers (ASILOMAR)},
  Pacific-Grove, CA, Nov. 2010.

\bibitem{ieeewc:5472938}
H.~A.~A. Saleh and S.~D. Blostein, ``A hybrid signalling scheme for cellular
  mobile networks over flat fading,'' in \emph{Proc. 25th Queens Biennial Sym.
  Commun. (QBSC)}, Kingston, ON, Canada, May 2010.

\bibitem{ieeewc:6241389}
H.~Huh, G.~Caire, H.~C. Papadopoulos, and S.~A. Ramprashad, ``Achieving massive
  {MIMO} spectral efficiency with a not-so-large number of antennas,''
  \emph{IEEE Trans. Wireless Commun.}, vol.~11, no.~9, pp. 3226--3239, Sep.
  2012.

\bibitem{ieeewc:6181190}
J.~Koppenborg, H.~Halbauer, S.~Saur, and C.~Hoek, ``3{D} beamforming trials
  with an active antenna array,'' in \emph{Int. ITG Wrkshp Smart Ant. (WSA)},
  Dresden, Germany, March 2012, pp. 110--114.

\bibitem{ieeewc:5285317}
O.~N.~C. Yilmaz, S.~H\"{a}m\"{a}l\"{a}inen, and J.~H\"{a}m\"{a}l\"{a}inen,
  ``System level analysis of vertical sectorization for 3{GPP} {LTE},'' in
  \emph{Proc. 6th Int. Sym. Wireless Commun. Systems (ISWCS)}, Siena-Tuscany,
  Italy, Sep. 2009.

\bibitem{ieeewc:5910725}
S.~Saur and H.~Halbauer, ``Exploring the vertical dimension of dynamic beam
  steering,'' in \emph{Proc. 8th Int. Wrkshp Multi-Carrier Sys. Solutions
  (MC-SS)}, Herrsching, Germany, May 2011.

\bibitem{ieeewc:Halbauer2011}
H.~Halbauer, S.~Saur, J.~Koppenborg, and C.~Hoek, ``Interference avoidance with
  dynamic vertical beamsteering in real deployments,'' in \emph{Proc. IEEE
  Wireless Commun. Netw. Conf. (WCNC)}, Paris, France, April 2012.

\bibitem{ieeewc:6218672}
N.~Seifi, M.~Coldrey, and M.~Viberg, ``Throughput optimization for {MISO}
  interference channels via coordinated user-specific tilting,'' \emph{IEEE
  Commun. Letters}, vol.~16, no.~8, pp. 1248--1251, Aug. 2012.

\bibitem{ieeewc:Seifi2012}
N.~Seifi, M.~Coldrey, and T.~Svensson, ``Throughput optimization in {MU}-{MIMO}
  systems via exploiting {BS} antenna tilt,'' in \emph{Proc. IEEE Global
  Commun. (GLOBECOM) Wkshps}, Anaheim, CA, Dec. 2012.

\bibitem{ieeetwc:6807764}
W.~Lee, S.~R. Lee, H.~B. Kong, S.~Lee, and I.~Lee, ``Downlink vertical
  beamforming designs for active antenna systems,'' \emph{IEEE Trans. Commun.},
  vol.~62, no.~6, pp. 1897--1907, Jun. 2014.

\bibitem{COML:5671564}
J.~Zhang, M.~Kountouris, J.~G. Andrews, and R.~W. Heath~Jr., ``Multi-mode
  transmission for the {MIMO} broadcast channel with imperfect channel state
  information,'' \emph{IEEE Trans. Commun.}, vol.~59, no.~3, pp. 803--814, Mar.
  2011.

\bibitem{ieeecl:5953530}
R.~W. Heath~Jr., T.~Wu, Y.~Kwon, and A.~Soong, ``Multiuser {MIMO} in
  distributed antenna systems with out-of-cell interference,'' \emph{IEEE
  Trans. Signal Process.}, vol.~59, no.~10, pp. 4885--4899, Oct. 2011.

\bibitem{ieeetwc:6880856}
N.~Seifi, J.~Zhang, R.~W. Heath~Jr., T.~Svensson, and M.~Coldrey, ``Coordinated
  3{D} beamforming for interference management in cellular networks,''
  \emph{IEEE Trans. Wireless Commun.}, vol.~13, no.~10, pp. 5396--5410, Oct.
  2014.

\bibitem{ieeecl:5733435}
J.~Hoydis, M.~Kobayashi, and M.~Debbah, ``Optimal channel training in uplink
  network {MIMO} systems,'' \emph{IEEE Trans. Signal Process.}, vol.~59, no.~6,
  pp. 2824--2833, Jun. 2011.

\bibitem{ieeecl:J.Hoydis2012}
J.~Hoydis, A.~Mueller, R.~Couillet, and M.~Debbah, ``Analysis of multicell
  cooperation with random user locations via deterministic equivalents,'' in
  \emph{Proc. 8th Wrkshp Spatial Stoch. Models for Wireless Networks
  (SpaSWIN)}, Paderborn, Germany, May 2012.

\bibitem{ieeewc:Muller2012}
A.~Muller, J.~Hoydis, R.~Couillet, and M.~Debbah, ``Optimal 3{D} cell planning:
  A random matrix approach,'' in \emph{Proc. IEEE Global Commun. Conf.
  (GLOBECOM)}, Anaheim, CA, Dec. 2012.

\bibitem{ieeewc:raey}
W.~Xu and L.~Liang, ``\BIBforeignlanguage{English}{On coordinated multi-point
  transmission with partial channel state information via delayed feedback},''
  \emph{\BIBforeignlanguage{English}{Wireless Personal Commun.}}, vol.~75,
  no.~4, pp. 2103--2119, Apr. 2014.

\bibitem{ieeewc:6503474}
D.~Jaramillo-Ramirez, M.~Kountouris, and E.~Hardouin, ``Coordinated multi-point
  transmission with quantized and delayed feedback,'' in \emph{IEEE Global
  Commun. Conf. (GLOBECOM)}, Anaheim, CA, Dec. 2012.

\bibitem{ieeewc:6095627}
H.~Huh, A.~M. Tulino, and G.~Caire, ``Network {MIMO} with linear zero-forcing
  beamforming: Large system analysis, impact of channel estimation, and
  reduced-complexity scheduling,'' \emph{IEEE Trans. Info. Theory}, vol.~58,
  no.~5, pp. 2911--2934, May 2012.

\bibitem{ieeetwc:6881264}
N.~Seifi, R.~W. Heath~Jr., M.~Coldrey, and T.~Svensson, ``Joint transmission
  mode and tilt adaptation in coordinated small-cell networks,'' in \emph{Proc.
  IEEE Int. Conf. Commun. (ICC) Wkshps}, Sydney, Australia, Jun. 2014.

\bibitem{ieeewc:3gpp}
3GPP~TR~36.814~V9.0, ``Further advancements for {E}-{UTRA} physical layer aspects,''
  Tech. Rep., Mar. 2010.

\bibitem{ieeetc:Halbuer2013}
H.~Halbuer, J.~Koppenborg, J.~Holfeld, M.~Danneberg, M.~Grieger, and
  G.~Fettweis, ``Field trial evaluation of 3{D} beamforming in a multicell
  scenario,'' in \emph{Int. ITG Wrkshp Smart Ant. (WSA)}, Stuttgart, Germany,
  2013.

\bibitem{ieeewc:5773636}
M.~Kobayashi, N.~Jindal, and G.~Caire, ``Training and feedback optimization for
  multiuser {MIMO} downlink,'' \emph{IEEE Trans. Commun.}, vol.~59, no.~8, pp.
  2228--2240, Aug. 2011.

\bibitem{ieeewc:746779}
T.~L. Marzetta and B.~M. Hochwald, ``Capacity of a mobile multiple-antenna
  communication link in rayleigh flat fading,'' \emph{IEEE Trans. Info.
  Theory}, vol.~45, no.~1, pp. 139--157, Jan. 1999.

\bibitem{ieeewc:5594707}
R.~Zhang, ``Cooperative multi-cell block diagonalization with per-base-station
  power constraints,'' \emph{IEEE J. Sel. Areas Commun.}, vol.~28, no.~9, pp.
  1435--1445, Dec. 2010.

\bibitem{ieeewc:5493599}
F.~Athley and M.~N. Johansson, ``Impact of electrical and mechanical antenna
  tilt on {LTE} downlink system performance,'' in \emph{Proc. IEEE Veh.
  Technol. Conf. (VTC)}, Taipei, Taiwan, May 2010.

\bibitem{ieeewc:Mo00}
J.~Mo and J.~Walrand, ``Fair end-to-end window-based congestion control,''
  \emph{IEEE/ACM Trans. Netw.}, vol.~8, pp. 556--567, Oct. 2000.

\bibitem{ieeewc:Caire2010}
G.~Caire, S.~A. Ramprashad, and H.~C. Papadopoulos, ``Rethinking network
  {MIMO}: Cost of {CSIT}, performance analysis, and architecture comparisons,''
  in \emph{Proc. Info. Theory and App. (ITA)}, San Diego, CA, May 2010.

\bibitem{DBLP:journals/ejwcn/LiBS12}
J.~Li, C.~Botella, and T.~Svensson, ``Resource allocation for clustered network
  {MIMO} {OFDMA} systems,'' \emph{{EURASIP} J. Wireless Commun. Netw.}, 2012.

\bibitem{ieeetwc:5959253}
J.~Y. Hwang, J.~Kim, T.~Kim, and Y.~Han, ``A periodic frequency band rotation
  scheme for multi-cell coordination clustering,'' \emph{IEEE Commun. Letters},
  vol.~15, no.~9, pp. 956--958, Sep. 2011.

\bibitem{ieeetwc:6884162}
H.~Purmehdi, R.~S. Elliott, W.~A. Krzymien, and J.~Melzer, ``Rotating
  clustering with simulated annealing user scheduling for coordinated
  heterogeneous {MIMO} cellular networks,'' in \emph{Proc. IEEE Int. Conf.
  Commun. (ICC)}, Sydney, Australia, Jun. 2014.

\bibitem{ieeetwc:4533793}
A.~Papadogiannis, D.~Gesbert, and E.~Hardouin, ``A dynamic clustering approach
  in wireless networks with multi-cell cooperative processing,'' in \emph{Proc.
  IEEE Int. Conf. Commun. (ICC)}, Beijing, China, May 2008.

\bibitem{ieeetwc:6827165}
J.~Kim, H.~Lee, and S.~Chong, ``Virtual cell beamforming in cooperative
  networks,'' \emph{IEEE J. Sel. Areas Commun.}, vol.~32, no.~6, pp.
  1126--1138, Jun. 2014.

\bibitem{ieeewc:1657802}
J.~Zhang, R.~Chen, J.~G. Andrews, A.~Ghosh, and R.~W. Heath~Jr., ``Networked
  {MIMO} with clustered linear precoding,'' \emph{IEEE Trans. Wireless. Comm.},
  vol.~8, no.~4, pp. 1910--1921, Apr. 2009.

\bibitem{Muirhead1982}
R.~J. Muirhead, \emph{Aspects of Multivariate Statistical Theory}.\hskip 1em
  plus 0.5em minus 0.4em\relax New York, NY: Wiley, 1982.

\bibitem{ieeecl:Table}
I.~S. Gradshteyn and I.~M. Ryzhik, \emph{Table of Integrals, Series, and
  Products}.\hskip 1em plus 0.5em minus 0.4em\relax Orlando, FL: Academic
  Press, 1994.

\bibitem{ieeecl:Prud}
A.~P. Prudnikov, Y.~A. Brychkov, and O.~I. Marichev, \emph{Integrals and
  Series, Volume 3: More Special Functions}, New York, NY, 1990.

\end{thebibliography}
% Generated by IEEEtran.bst, version: 1.12 (2007/01/11)
% Generated by IEEEtran.bst, version: 1.12 (2007/01/11)

\end{document}